\theoremstyle{plain}
\theoremstyle{plain}
\theoremstyle{plain}
\newtheorem{theorem}{\protect\theoremname}
\theoremstyle{plain}
\newtheorem{mainresult}{\protect\mainresultname}
\theoremstyle{plain}
\newtheorem{lemma}{\protect\lemmaname}
\theoremstyle{plain}
\theoremstyle{definition}
\theoremstyle{definition}
\theoremstyle{definition}
\newtheorem*{remark*}{\protect\remarkname}
\providecommand{\claimname}{Claim}
\providecommand{\lemmaname}{Lemma}
\providecommand{\mainresultname}{Main Result}
\providecommand{\propositionname}{Proposition}
\providecommand{\theoremname}{Theorem}
\providecommand{\corollaryname}{Corollary} 
\providecommand{\definitionname}{Definition}
\providecommand{\assumptionname}{Assumption}
\providecommand{\remarkname}{Remark}
\newcommand{\overbar}[1]{\mkern 1.25mu\overline{\mkern-1.25mu#1\mkern-0.25mu}\mkern 0.25mu}
\newcommand{\openone}{\mathds{1}}
\newcommand{\Shat}{\widehat{S}}
\newcommand{\Ctil}{\widetilde{C}}
\newcommand{\kbar}{\overbar{k}}
\newcommand{\Dc}{\mathcal{D}}
\newcommand{\Gc}{\mathcal{G}}
\newcommand{\Tc}{\mathcal{T}}
\newcommand{\EE}{\mathbb{E}}
\newcommand{\PP}{\mathbb{P}}
\newcommand{\var}{\mathrm{Var}}
\newcommand{\cov}{\mathrm{Cov}}
\newcommand{\Tsf}{\mathsf{T}}
\newcommand{\Ssf}{\mathsf{S}}
\providecommand{\tabularnewline}{\\}
\providecommand{\algorithmname}{Algorithm}
\newcommand{\manuallabel}[2]{\def\@currentlabel{#2}\label{#1}}
\begin{document} 

\title{A Fast Binary Splitting Approach to Non-Adaptive Group Testing}
\author{Eric Price and Jonathan Scarlett}

\maketitle

\begin{abstract}
    In this paper, we consider the problem of noiseless non-adaptive group testing under the for-each recovery guarantee, also known as probabilistic group testing.  In the case of $n$ items and $k$ defectives, we provide an algorithm attaining high-probability recovery with $O(k \log n)$ scaling in both the number of tests and runtime, improving on the best known $O(k^2 \log k \cdot \log n)$ runtime previously available for any algorithm that only uses $O(k \log n)$ tests.  Our algorithm bears resemblance to Hwang's adaptive generalized binary splitting algorithm (Hwang, 1972); we recursively work with groups of items of  geometrically vanishing sizes, while maintaining a list of ``possibly defective'' groups and circumventing the need for adaptivity.  While the most basic form of our algorithm requires $\Omega(n)$ storage, we also provide a low-storage variant based on hashing, with similar recovery guarantees.
\end{abstract}

\long\def\symbolfootnote[#1]#2{\begingroup\def\thefootnote{\fnsymbol{footnote}}\footnote[#1]{#2}\endgroup}

\symbolfootnote[0]{ E.~Price is with the Department of Computer Science, University of Texas at Austin (e-mail: \url{ecprice@cs.utexas.edu}). J.~Scarlett is with the  Department of Computer Science and the Department of Mathematics, National University of Singapore  (e-mail: \url{scarlett@comp.nus.edu.sg}).

E.~Price was supported in part by NSF Award CCF-1751040 (CAREER).  J.~Scarlett was supported by an NUS Early Career Research Award.}

\section{Introduction} \label{sec:intro}

Group testing is a classical statistical problem with a combinatorial flavor \cite{Dor43,Du93,Ald19}, and has recently regained significant attention following new applications \cite{Ant11,Cli10,Cor05}, connections with compressive sensing \cite{Gil08,Gil07}, and most recently, utility in testing for COVID-19 \cite{Yel20,Hog20,Ver20}.  The goal is to identify a defective (or infected) subset of items (or individuals) based on a number of suitably-designed tests, with the binary test outcomes indicating whether or not the test includes at least one defective item.

In both classical studies and recent works, considerable effort has been put into the development of group testing algorithms achieving a given recovery criterion with a near-optimal number of tests \cite{Mal78,Mal80,Du93,Ati12,Ald14a,Cha14,Joh16,Sca15b,Ald15,Coj19,Coj19a}, and many solutions are known with decoding time linear or slower in the number of items.  In contrast, works seeking to further reduce the decoding time have only arisen more recently \cite{Che09,Ind10,Ngo11,Cai13,Lee15a,Ina19,Bon19a}; see Section \ref{sec:related} for an overview.

In this paper, we present a non-adaptive group testing algorithm that guarantees high-probability recovery of the defective set with $O(k \log n)$ scaling in both the number of tests and the decoding time in the case of $n$ items and $k$ defectives.  By comparison, the best previous decoding time alongside $O(k \log n)$ tests was $O(k^2 \log k \cdot \log n)$ \cite{Bon19a}, with faster algorithms incurring suboptimal $O(k \log k \cdot \log n)$ scaling in the number of tests \cite{Cai13,Lee15a}.  By a standard information-theoretic lower bound \cite[Sec.~1.4]{Ald19}, the $O(k \log n)$ scaling in the number of tests is optimal whenever $k \le n^{1-\Omega(1)}$.

Before outlining the related work and contributions in more detail, we formally introduce the problem.

\subsection{Problem Setup} \label{sec:setup}

We consider a set of $n$ items indexed by $\{1,\dotsc,n\}$, and let $S \subset \{1,\dotsc,n\}$ be the set of defective items.  The number of defectives is denoted by $k = |S|$; this is typically much smaller than $n$, so we assume that $k \le \frac{n}{2}$.  For clarity of exposition, we will present our algorithm and analysis for the case that $k$ is known, but these will trivially extend to the case that only an upper bound $\kbar \ge k$ is known, and $\kbar$ replaces $k$ in the number of tests and decoding time. 

A sequence of $t$ tests $X^{(1)},\dotsc,X^{(t)}$ is performed, with $X^{(i)} \in \{0,1\}^n$ indicating which items are in the $i$-th test, and the resulting outcomes are $Y^{(i)} = \bigvee_{j \in S} X_j^{(i)}$ (i.e., $1$ if there is any defective item in the test, $0$ otherwise).  We focus on the non-adaptive setting, in which all tests $X^{(1)},\dotsc,X^{(t)}$ must be designed prior to observing any outcomes.

We consider a for-each style recovery guarantee, in which the goal is to develop a randomized algorithm that, for any fixed defective set $S$ of cardinality $k$, produces an estimate $\Shat$ satisfying $\PP[\Shat \ne S] \le \delta$ for some small $\delta > 0$ (typically $\delta \to 0$ as $n \to \infty$).  In our algorithm, only the tests $\{X^{(i)}\}_{i=1}^t$ will be randomized, and the decoding algorithm producing $\Shat$ from the test outcomes will be deterministic.  


\subsection{Related Work} \label{sec:related}

The existing works on group testing vary according to the following defining features \cite{Du93,Ald19}:
\begin{itemize}
    \item {\bf For-each vs.~for-all guarantees.}  In {\em combinatorial (for-all) group testing}, one seeks to construct a test design that guarantees the recovery of {\em all} defective sets up to a certain size.  In contrast, in {\em probabilistic (for-each) group testing}, the test design may be randomized, and the algorithm is allowed some non-zero probability of error.  
    \item {\bf Adaptive vs.~non-adaptive.} In the {\em adaptive} setting, each test may be designed based on all previous outcomes, whereas in the {\em non-adaptive setting}, all tests must be chosen prior to observing any outcomes.  The non-adaptive setting is often preferable in practice, as it permits the tests to be performed in parallel. 
    \item {\bf Noiseless vs.~noisy.} In the {\em noiseless} setting, the test outcomes are perfectly reliable, whereas in {\em noisy settings}, some tests may be flipped according to some probabilistic or adversarial noise model.
\end{itemize}
As outlined in the previous subsection, our focus is on the for-each guarantee, non-adaptive testing, and noiseless tests, but for comparison, we also discuss other variants in this section.

\begin{table}
\begin{center}
    \begin{tabular}{|>{\centering}m{2.8cm}|>{\centering}m{2.9cm}|>{\centering}m{3.2cm}|>{\centering}m{2.6cm}|}
    \hline 
    \textbf{Reference} & \textbf{Number of tests} & \textbf{Decoding time} & \textbf{Construction}\tabularnewline
    \hline 
    \hline
    SAFFRON \cite{Lee15a}   & $O(k\cdot\log k\cdot\log n)$ & $O(k\log k)$\footnotemark & Randomized\tabularnewline
    \hline
    GROTESQUE~\cite{Cai13} & $O(k\cdot\log k\cdot\log n)$ & $O(k\cdot\log k\cdot\log n)$ & Randomized\tabularnewline
    \hline 
%
    Inan \emph{et al.}~\cite{Ina19}  & $O\big(k\cdot\log n\cdot\log\frac{\log n}{\log k}\big)$ & $\ensuremath{O\big(k^{3}\cdot\log n\cdot\log\frac{\log n}{\log k}\big)}$ & Explicit\tabularnewline
    \hline 
    BMC \cite{Bon19a} & $O(k\log n)$ & $O(k^2 \cdot\log k\cdot\log n)$ & Randomized\tabularnewline
    \hline
    {\bf This Paper} & $O(k\log n)$ & $O(k \log n)$ & Randomized\tabularnewline
    \hline
    \end{tabular}
    \par\end{center}

    \caption{Overview of existing noiseless non-adaptive group testing results with ${\rm poly}(k \log n)$ decoding time under the for-each guarantee, with $n$ items and $k$ defectives.  \label{tbl:sublinear}} \vspace*{-3ex}
\end{table}

\footnotetext{The decoding time of SAFFRON is typically stated as
  $O(k \log k \cdot \log n)$, but can be $O(k \log k)$ in the word-RAM
  model of computation, where the test outcomes are stored as
  $O(k \log k)$ words of length $\log_2 n$ each; see Appendix
  \ref{sec:saffron} for details. \label{foot:saffron}}

By a simple counting argument, even in the least stringent scenario of noiseless adaptive testing and the for-each guarantee, $\Omega\big(k \log \frac{n}{k}\big)$ tests are required for reliable recovery \cite{Mal78,Bal13}.  In the noiseless adaptive setting, the classical generalized binary splitting algorithm of Hwang \cite{Hwa72} matches this bound with sharp constant factors, and attains the stronger for-all guarantee.  More recently, $O(k \log n)$ adaptive tests and decoding time were shown to suffice under the for-each guarantee in the presence of random noise \cite{Cai13}.

In the non-adaptive setting with the for-each guarantee, a recent of result of \cite{Ald18} shows that $n$ tests are required to attain asymptotically vanishing error probability as $n \to \infty$ with $k = \lfloor \beta n \rfloor$, for arbitrarily small $\beta > 0$.  Thus, $O(k \log \frac{n}{k})$ scaling (with a universal implied constant) is impossible; see also \cite{Coj19a}.  On the other hand, $O(k \log n)$ tests do suffice, and this scaling is equivalent to $O(k \log \frac{n}{k})$ in the regime $k \le n^{1-\Omega(1)}$.  Early works
achieved this for the scaling regime $k = O(1)$
\cite{Mal78,Mal80,Ati12}, and a variety of more recent works
considered more general $k$
\cite{Mal78,Mal80,Ati12,Sca15b,Ald15,Joh16,Coj19,Coj19a}, particularly
under ``unstructured'' random test designs such as i.i.d.~Bernoulli
\cite{Ati12,Ald14a,Sca15b} and near-constant tests-per-item
\cite{Joh16,Coj19}.  In fact, following the recent work of Coja-Oghlan
{\em et al.} \cite{Coj19a}, the precise constants for non-adaptive
group testing have been nailed down in the regime
$k \le n^{1-\Omega(1)}$, with the algorithm for the upper bound
requiring $\Omega(n)$ decoding time.

\addtocounter{footnote}{-1}

Since non-adaptive algorithms with ${\rm poly}(k \log n)$ decoding time under the for-each guarantee are particularly relevant to the present paper, we provide a summary of the existing works in Table \ref{tbl:sublinear}.  We separate our discussion according to whether or not each algorithm attains $O(k \log n)$ scaling in the number of tests, as we believe this to be a particularly important desiderata in practice when tests are expensive:
\begin{itemize}
    \item SAFFRON and GROTESQUE both use $O(k\log k \cdot \log n)$ tests, which is suboptimal by a $\log k$ factor.  In particular, SAFFRON's decoding time is as low as $O(k \log k)$ in the word-RAM model.\footnotemark~While storage requirements were not a significant point of focus in the existing works, we also note (for later comparison) that SAFFRON only requires $O(k\log k \cdot \log n)$ bits of storage; see Appendix \ref{sec:saffron} for details.
    \item In the regime $k = \Theta(n^{\alpha})$ with fixed $\alpha \in (0,1)$, Inan {\em et al.}~\cite{Ina19} attain $O(k^3 \log n)$ decoding time with $O(k \log n)$ tests, whereas the number of tests becomes suboptimal in sparser regimes such as $k = {\rm poly}(\log n)$.  The best previous decoding time for an algorithm that only uses $O(k \log n)$ tests is $O(k^2 \log k \cdot \log n)$, via bit-mixing coding (BMC) \cite{Bon19a}.  To our knowledge, all other existing algorithms succeeding with $O(k \log n)$ tests incur $\Omega(n)$ decoding time \cite{Mal78,Mal80,Ati12,Sca15b,Ald15,Joh16,Coj19,Coj19a}.
\end{itemize}

Our main theoretical contribution is to provide an algorithm that attains $O(k \log n)$ scaling in both the number of tests and decoding time, as well as using distinct algorithmic ideas from the existing works (see Section \ref{sec:description} for discussion).  In particular, among the algorithms using $O(k \log n)$ tests, ours reduces the dependence on $k$ in the decoding time from quadratic to linear.


It is worth noting that the works \cite{Cai13,Lee15a,Ina19,Bon19a} also extend their guarantees to noisy settings, and the approach in \cite{Ina19} has the additional advantage of using an {\em explicit} test design, i.e., the test vectors $X^{(1)},\dotsc,X^{(t)}$ can be constructed deterministically in time polynomial in $n$ and $t$.  Although noisy and/or de-randomized variants of our algorithm may be possible, this is deferred to future work.\footnote{Our analysis can easily be adapted to handle false positive tests, but handling false negative tests appears to require more significant changes and a slightly increased decoding time. \label{foot:noise}} 

In addition, \cite{Lee15a} demonstrated that $O\big( k \log n \cdot \log\frac{1}{\epsilon} \big)$ tests suffice for SAFFRON in the case that one is only required to identify a fraction $1 - \epsilon$ of the defectives, as opposed to the entire defective set.  Thus, $O(k \log n)$ scaling is maintained for approximate recovery with constant $\epsilon > 0$, but not for the more common requirement of exact recovery.

While we have focused our discussion on the for-each setting, the first ${\rm poly}(k \log n)$-time non-adaptive group testing algorithms were for the more stringent for-all setting \cite{Che09,Ind10,Ngo11}.  
The strength of the for-all guarantee comes at the expense of requiring significantly more tests, e.g., see \cite{Dya82} for an $\Omega\big(\min\big\{k^2 \frac{\log n}{\log k},n\big\}\big)$ lower bound.  An $O(k^2 \log n)$ upper bound on the number of tests was originally attained with $\Omega(n)$ decoding time, \cite{Dya83,Por11}, and more recently with ${\rm poly}(k \log n)$ decoding time \cite{Ind10,Ngo11}.  The earlier work of \cite{Che09} attained ${\rm poly}(k \log n)$ decoding time under a list decoding recovery criterion, and also allowed for adversarial noise in the test outcomes.

\subsection{Summary of Results} \label{sec:contr}

Before summarizing our main results, we briefly highlight that our algorithmic techniques are distinct from the existing works summarized above (see Section \ref{sec:description} for further details), and are more closely related to the adaptive binary splitting approach of Hwang \cite{Hwa72}.  We first test large groups of items together, placing each group into a single randomly-chosen test among a sequence of $O(k)$ tests.  We then ``split'' these groups into smaller sub-groups, while using the $O(k)$ test outcomes to eliminate those known to be non-defective.  This process is repeated (with the elimination step ensuring that the number of groups under consideration does not grow too large) until a superset of $S$ is found.  This superset is shown to be of size $O(k)$ with high probability, and $S$ is deduced from this superset via a final sequence of $O(k \log k)$ tests (similar ideas to this final step have appeared in works such as \cite{Che09,Ngo11}).  Despite the sequential nature of this procedure, the tests can be performed non-adaptively.

Our first main result is informally stated as follows; see Theorem \ref{thm:main1} for a formal statement.

\begin{mainresult} \label{mr1}
    {\em (Algorithmic Guarantees -- Informal Version)} There exists a non-adaptive group testing algorithm that, for any constant $c > 0$, succeeds with probability $1 - O(k^{-c})$ using $O(k \log n)$ tests and $O(k \log n)$ decoding time, and requires $O(n \log^2 k)$ bits of storage.
\end{mainresult}

Main Result \ref{mr1} uses $\Omega(n)$ storage to record 
the random choices of tests the items are included in.  For our second main result, we consider a low-storage variant in which the tests are instead allocated using
hash functions.  In this case, the precise decoding time and storage
guarantees depend on the properties of the hash family used.  The
following theorem informally states the guarantee resulting from
$O(\log k)$-wise independent hashing using the classical hash family
of Wegman and Carter \cite{Weg81}; see Theorem \ref{thm:main2} for a
formal and more general statement.

\begin{mainresult} \label{mr2}
    {\em (Algorithmic Guarantees with Reduced Storage -- Informal Version)} There exists a non-adaptive group testing algorithm that, for any constant $c > 0$, succeeds with probability $1 - O\big(\frac{\log n}{k^c}\big)$ using $O(k \log n)$ tests and $O(k \log k \cdot \log n)$ decoding time, and requires $O(k \log n + \log k \cdot \log^2 n)$ bits of storage.
\end{mainresult}


%
%
\section{Non-Adaptive Binary Splitting Algorithm} \label{sec:alg_result}

In this section, we introduce our algorithm, and state and prove our main result giving guarantees on its correctness, number of tests, decoding time, and storage.  For simplicity of notation, we assume throughout the analysis that $k$ and $n$ are powers of two.  Our algorithm only requires an upper bound on the number of defectives, and hence, any other value of $k$ can simply be rounded up to a power of two.  In addition, the total number of items $n$ can be increased to a power of two by adding ``dummy'' non-defective items.

\subsection{Description of the Algorithm} \label{sec:description}

\begin{figure}
    \begin{centering}
        \includegraphics[width=0.8\columnwidth]{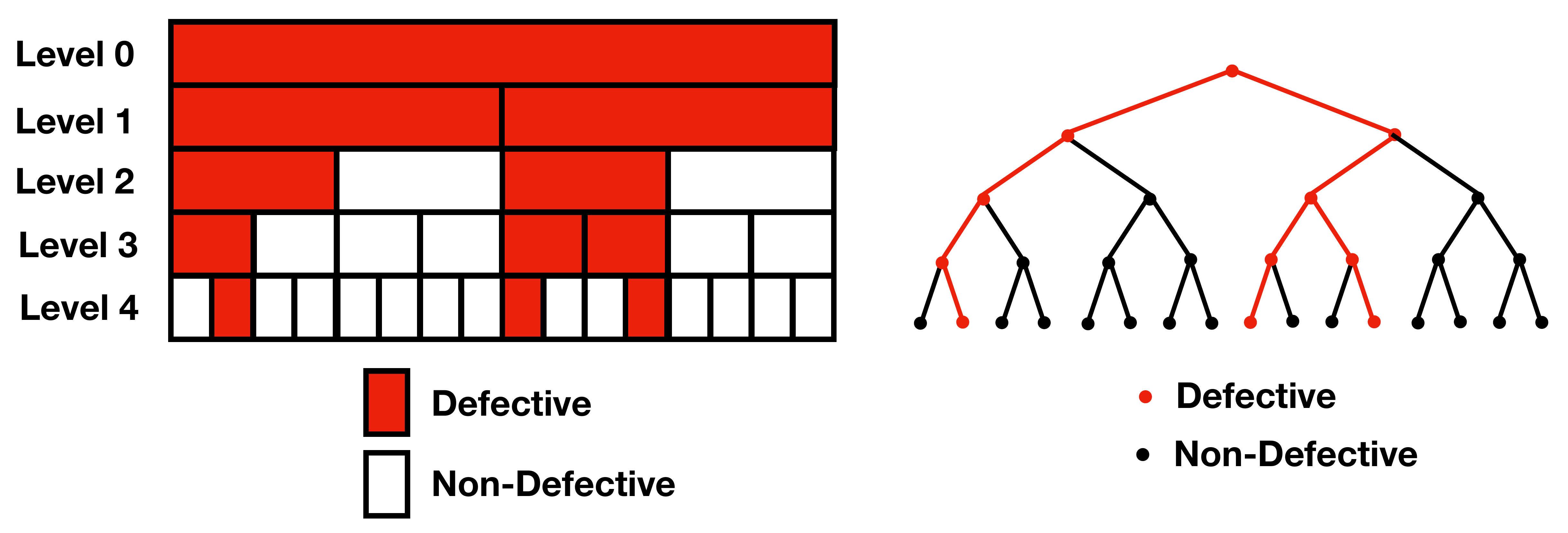}
        \par
    \end{centering}
    
    \caption{Example structure of groups of items that are tested together when $n = 16$ and $k =3$ (Left), and the corresponding tree representation (Right).  Leaves correspond to single items, and nodes at the higher levels correspond to groups of items. \label{fig:tree}} \vspace*{-1.5ex}
\end{figure}

We propose an algorithm that works with a tree structure, as illustrated in Figure \ref{fig:tree}.  On the left of the figure, we have $1+\log_2 n$ levels, with the top level containing all items, the second level containing two groups with half the items each, and so on, with each level ``splitting'' the previous levels' groups in half until the final level containing individual items.  The ordering of items is inconsequential for our analysis and results provided that the two children of a given node can be identified in constant time, so for convenience we assume the natural order.\footnote{Alternatively, one may use a dyadic splitting approach: Assign each item a unique $(\log_2 n)$-bit string, and first split according to the first bit, then the second bit, and so on.}  For $\ell = 0,\dotsc,\log_2 n$, the $j$-th group at the $\ell$-th level is denoted by $G^{(\ell)}_j \subseteq \{1,\dotsc,n\}$, and we observe that for fixed $\ell$, there are $2^{\ell}$ groups $G^{(\ell)}_1,\dotsc,G^{(\ell)}_{2^{\ell}}$ of cardinality $\frac{n}{2^{\ell}}$ each.

We consider the binary tree representation in Figure \ref{fig:tree} (Right), in which each node corresponds to a group of items (e.g., the root corresponds to the set of all items, and the leaves correspond to individual items).  The levels of the tree are indexed by $\ell = 0,\dotsc,\log_2 n$.  Our algorithm works down the tree one level at a time, keeping a list of {\em possibly defective} (PD) nodes, and performing tests to obtain such a list at the next level, while ideally keeping the size of the list small (e.g., $O(k)$).  When we perform tests at a given level, we treat each node as a ``super-item''; including a node in a test amounts to including all of the items in the corresponding group $G^{(\ell)}_j$.
While this may appear to naturally lead to an adaptive algorithm, we can in fact perform all of the tests non-adaptively.

If we were to test nodes at the root or the early levels, they would almost all return positive, and hence not convey significant information.  We therefore let the algorithm start at the level $\ell_{\min} = \log_2 k$, and we consider the following test design: 
\begin{itemize}
    \item For each $\ell = \log_2 k, \dotsc, \log_2 n - 1$, form a sequence of $C k$ tests (for some $C > 0$ to be selected later), and for each $j=1,\dotsc,2^{\ell}$, choose a single such test uniformly at random, and place all items from $G^{(\ell)}_j$ into that test.
    \item For the final level $\ell = \log_2 n$, each $G^{(\ell)}_j = \{j\}$ is a singleton.  In this case, we form $C' \log k$ sequences of $2k$ tests (for some $C' > 0$).  For each item, and each of the $C' \log k$ sequences of tests, we place the item in one of the $2k$ corresponding tests, chosen uniformly at random.
\end{itemize}
This creates a total of $t = Ck \log_2\frac{n}{k} + 2 C' k \log k = O(k \log n)$ tests.

Upon observing the $t$ non-adaptive test outcomes, the decoder forms an estimate $\Shat$ of the defective set via the following procedure:
\begin{itemize}
    \item Initialize $\Gc^{(\ell_{\min})} = \{ G^{(\ell_{\min})}_j \}_{j=1}^{k}$, where $\ell_{\min} = \log_2 k$;
    \item Iterate the following for $\ell = \log_2 k,\dotsc,\log_2 n - 1$:
    \begin{itemize}
        \item For each group $G \in \Gc^{(\ell)}$, check whether the single test corresponding to that group is positive or negative.  If positive, then add both children of $G$ (see Figure \ref{fig:tree}) to $\Gc^{(\ell+1)}$.
    \end{itemize}
    \item Let the estimate $\Shat$ of the defective set be the (singleton) elements of $\Gc^{(\log_2 n)}$ that are not included in any negative test among the $2C' k \log k$ tests at the final level.
\end{itemize}

\paragraph*{Comparisons with existing methods} Our algorithm is distinct from existing sublinear-time non-adaptive group testing algorithms, which are predominantly based on code concatenation \cite{Ind10,Ngo11} or related methods that encode item indices' binary representations directly into the test matrix \cite{Cai13,Lee15a,Bon19}.  In fact, in the context of group testing, our approach is perhaps most reminiscent of the {\em adaptive} algorithm of Hwang \cite{Hwa72} (see also \cite{Du93,Ald19}), which identifies one defective at a time using binary splitting, removing each defective from consideration after it is identified.  By keeping track of the multiple defective nodes simultaneously and allowing a small number of false positives up to the final level, we are able to exploit similar ideas without requiring adaptivity.

Beyond group testing and binary splitting, this recursive
tree-structured approach is also reminiscent of ideas used in
sketching algorithms, such as dyadic tree recovery for count-min
sketch~\cite{Cor05a}.  Our particular approach---maintaining a
superset at each level, and relying on a lack of false negatives---is
most similar to the pyramidal reconstruction method for compressive
sensing under the earth mover's distance~\cite{Ind11}.

A nearly identical algorithm to ours was given independently in the concurrent work of \cite{Che20}, with similar theoretical guarantees.  In addition, \cite{Che20} observes that since the total number of non-discarded nodes in the algorithm is $O\big(k \log \frac{n}{k}\big)$ with overwhelming probability (see Lemma \ref{lem:Ntotal_hp} below), one can take a union bound over the $n \choose k$ possible defective sets to attain \emph{list recovery} in the combinatorial (for-all) setting.  This idea allows the authors of \cite{Che20} to attain a variety of other results, including in the related problems of heavy hitters and compressive sensing.  In the context of probabilistic (for-each) group testing, some slight advantages of our analysis include (i) proving that the first batch of tests leads to $O(k)$-size list recovery rather than $O\big(k \log \frac{n}{k}\big)$-size, thus circumventing the need for a third batch of tests, and (ii) only requiring $O(\log k)$-wise independence in our low-storage variant based on hashing (see Section \ref{sec:storage}), rather than $O\big(k \log \frac{n}{k}\big)$-wise independence (albeit at the expense of a higher error probability).

\subsection{Algorithmic Guarantees} \label{sec:main_result}

In the following, we provide the formal version of Main Result \ref{mr1}.  Here and subsequently, we assume a word-RAM model of computation; for instance, with $n$ items and $t$ tests, it takes constant time to read a single integer in $\{1,\dotsc,n\}$ from memory, perform arithmetic operations on such integers, fetch a single test outcome indexed by $\{1,\dotsc,t\}$, and so on.

\begin{theorem} \label{thm:main1}
    {\em (Algorithmic Guarantees)} Let $S$ be a fixed (defective) subset of $\{1,\dotsc,n\}$ of cardinality $k$.  For any constant $c > 0$, there exist choices of $C$ and $C'$ such that the group testing algorithm described in Section \ref{sec:description} satisfies the following with probability $1 - O(k^{-c})$:
    \begin{itemize}
        \item The returned estimate $\Shat$ equals $S$;
        \item The algorithm runs in time $O(k \log n)$;
        \item The algorithm uses $O(n (\log k)^2)$ bits of storage.
    \end{itemize}
    In addition, the number of tests scales as $O(k \log n)$.
\end{theorem}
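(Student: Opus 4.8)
The plan is to treat the four claims separately, observing that the bound on the number of tests and the storage bound hold essentially by construction, while the correctness and runtime guarantees both reduce to a single high-probability bound on the sizes of the possibly-defective lists $\Gc^{(\ell)}$. The number of tests is exactly $Ck\log_2\frac{n}{k}+2C'k\log k$, which is $O(k\log n)$ for any fixed $C,C'$. The storage is dominated by recording, for each item, the index of the test it occupies in each of the $C'\log k$ final-level sequences ($O(\log k)$ bits each), giving $O(n(\log k)^2)$ bits, plus $O(n\log k)$ bits for the single test index of each of the $O(n)$ internal-level groups. Neither quantity depends on the random outcomes, so both bounds are immediate.

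First I would record the deterministic half of correctness. Since the tests are noiseless, any test containing a defective item is positive, so for every defective $i\in S$ and every level $\ell\ge\ell_{\min}$ the ancestor group of $i$ is positive and is therefore never discarded; inductively, every ancestor of every defective survives, so all $k$ defective singletons appear in $\Gc^{(\log_2 n)}$, and since a defective item never lies in a negative test it is never removed in the final step. Hence $S\subseteq\Shat$ holds with probability one, and correctness reduces to showing $\Shat\subseteq S$, i.e.\ that every non-defective singleton reaching $\Gc^{(\log_2 n)}$ is removed by the final batch of tests.

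The heart of the argument is a high-probability bound on the list sizes $N_\ell:=|\Gc^{(\ell)}|$ (cf.\ Lemma~\ref{lem:Ntotal_hp}). I would analyse the recursion driving $N_\ell$, starting from $N_{\ell_{\min}}=k$, as follows. Condition on the placements at all levels below $\ell$ (which fix $\Gc^{(\ell)}$ and identify its defective members, of which there are $k_\ell\le k$) and on the placements of the at most $k$ defective groups at level $\ell$ (which fix the set of positive tests, of size at most $k$). A non-defective group in $\Gc^{(\ell)}$ is then positive iff its uniformly random test index falls among these positive tests, an event of probability at most $1/C$, and these events are conditionally independent across the non-defective members. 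Writing $F_\ell=N_\ell-k_\ell$, the number of positive false members is stochastically dominated by $\mathrm{Bin}(F_\ell,1/C)$, and since each positive node contributes exactly two children,
\[
    \EE\big[N_{\ell+1}\mid\mathcal{F}_{\ell-1}\big]\;\le\;2k+\tfrac{2}{C}\,N_\ell .
\]
Choosing $C>2$ makes this a contraction, so $\EE[N_\ell]=O(k)$ at every level; the process is a subcritical branching process with per-level immigration of size $O(k)$, and summing over the $\log_2\frac{n}{k}$ internal levels gives expected total list size $O\big(k\log\frac{n}{k}\big)$. The main obstacle is upgrading these expectation bounds to the required $1-O(k^{-c})$ probability: I would do this via a Chernoff bound on the conditionally-independent collision indicators at each level (controlling the jump $N_{\ell+1}-\EE[N_{\ell+1}\mid\mathcal{F}_{\ell-1}]$), combined with a stopping-time argument over levels to show $N_\ell$ never exceeds a constant multiple of $k$, whence $\Gc^{(\log_2 n)}$ has $O(k)$ elements and the total number of processed nodes is $O\big(k\log\frac{n}{k}\big)$, all with probability $1-O(k^{-c})$ once $C$ is taken large enough in terms of $c$.

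Finally I would analyse the last batch of tests, using that its randomness is independent of the earlier levels. Condition on the event from the previous step that $\Gc^{(\log_2 n)}$ contains $O(k)$ singletons. For a fixed non-defective singleton and a fixed one of the $C'\log k$ sequences of $2k$ tests, at most $k$ defectives can collide with it, so its test is positive with probability at most $\tfrac12$, and the probability that it avoids all negative tests, and thus survives, is at most $2^{-C'\log_2 k}=k^{-C'}$. A union bound over the $O(k)$ non-defective singletons shows that all of them are removed except with probability $O(k^{1-C'})$, which is $O(k^{-c})$ for $C'\ge c+1$; together with $S\subseteq\Shat$ this gives $\Shat=S$. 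For the runtime, each of the $O\big(k\log\frac{n}{k}\big)$ processed nodes costs $O(1)$ word-RAM operations (read its stored test index, read the outcome, enqueue its two children), and the final step costs $O(\log k)$ per singleton over $O(k)$ singletons, i.e.\ $O(k\log k)$, so the total is $O\big(k\log\frac{n}{k}+k\log k\big)=O(k\log n)$. A union bound over the $O(k^{-c})$ failure events from the list-size and final-resolution steps then completes the proof.
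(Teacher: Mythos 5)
Your overall skeleton---tests and storage by construction, the deterministic inclusion $S \subseteq \Shat$, a high-probability bound on the possibly-defective list sizes, and a final-level union bound over $O(k)$ surviving non-defectives---matches the paper's proof, and your final-level analysis and runtime/storage accounting are essentially those of Sections \ref{sec:final_level} and \ref{sec:numbers}. The gap is in the step you yourself call the heart of the argument: upgrading $\EE[N_\ell]=O(k)$ to a bound holding simultaneously for all levels with probability $1-O(k^{-c})$. Your route---a Chernoff bound at each level conditioned on the earlier levels, combined with a stopping-time argument over the $\log_2\frac{n}{k}$ levels---is in effect a union bound over levels, and therefore yields failure probability $O\big(\log\frac{n}{k}\cdot e^{-\Omega(k)}\big)$ rather than $O(k^{-c})$. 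The extra $\log\frac{n}{k}$ factor cannot be removed by taking $C$ large, since enlarging $C$ only improves the constant inside the exponent. Consequently the bound is not $O(k^{-c})$ uniformly in $(n,k)$: for instance, when $k=\Theta(\sqrt{\log\log n})$ the theorem's claim is non-vacuous (since $k^{-c}=o(1)$), yet $\log\frac{n}{k}\cdot e^{-\Omega(k)}\to\infty$, so your argument gives nothing there. What your proof does establish is Theorem \ref{thm:main1} restricted to $k=\Omega(\log\log n)$, or equivalently a weaker error bound of the form $O(k^{-c}\log n)$---which is essentially the guarantee of the low-storage variant (Theorem \ref{thm:main2}), whose proof indeed proceeds by exactly your level-by-level induction (Lemma \ref{lem:induction}) and pays the $\log n$ factor in its error probability for doing so.

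The missing idea is to avoid conditioning level by level and instead exploit the global independence structure: conditioned on the defective placements $\Tc_S$, the test choices of \emph{all} non-defective nodes at \emph{all} levels are mutually independent, each landing in a positive test with probability at most $1/C$ (Lemma \ref{lem:marking}). Hence the reached non-defective nodes decompose into independent subcritical Galton--Watson trees, rooted at the non-defective children of defective nodes and at the non-defective level-$\ell_{\min}$ nodes, spanning all levels at once. The paper shows that the total progeny of each such tree is a sub-exponential random variable (Lemma \ref{lem:branching}, via Dwass's formula for total progeny, and Lemmas \ref{lem:leaf_tail}--\ref{lem:leaf_subexp} for leaf counts), and then sums the $O\big(k\log\frac{n}{k}\big)$, resp.\ $O(k)$, independent sub-exponential contributions to conclude $N_{\rm total}=O\big(k\log\frac{n}{k}\big)$ and $N_{\rm leaf}=O(k)$ with probability $1-e^{-\Omega(k)}$ (Lemmas \ref{lem:Ntotal_hp} and \ref{lem:Nleaf_hp})---with no $\log$ factor, because no union bound over levels is ever taken. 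Your per-level conditional-independence observation is correct as far as it goes; it just needs to be applied to the whole tree simultaneously rather than one level at a time.
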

We note that the success probability approaches one as $n \to \infty$ in any asymptotic regime satisfying $k = \omega(1)$, but not in the regime $k = O(1)$; the same is true in the existing works listed in Table \ref{tbl:sublinear}, with the exception of \cite{Bon19a}.  In addition, it is straightforward to improve the success probability to $1 - e^{-\Omega(k)} - O(n^{-c})$ by using $O(\log n)$ (rather than $O(\log k)$) sequences of $2k$ tests at the final level.

Theorem \ref{thm:main1} is proved in the remainder of the section.  We emphasize that our focus is on scaling laws, and we make no significant effort to optimize the underlying constant factors.  

\subsection{Analysis} \label{sec:analysis}

Throughout the analysis, the defective set $S$ will be fixed but otherwise arbitrary, and we will condition on fixed placements of the defective items into tests (and hence, fixed test outcomes).  The test placements of the non-defective items are independent of those of the defectives, and our analysis will hold regardless of which particular tests the defectives were placed in.  We use $\PP[\cdot \,|\, \Tc_S]$ to represent this conditioning, and we refer to $\Tc_S$ as the defective test placements.  In addition, for the tree illustrated in Figure \ref{fig:tree}, we refer to nodes containing defective items as defective nodes, to all other nodes as non-defective nodes, and to the set of defective nodes as the defective (sub-)tree.\footnote{Since we start at level $\ell_{\min} = \log_2 k$, this could more precisely be considered as a forest.}

We begin with the following simple lemma.

\begin{lemma} \label{lem:marking}
    {\em (Probabilities of Non-Defectives Being in Positive Tests)} Under the test design described in Section \ref{sec:description}, the following holds at any given level $\ell = \log_2 k, \dotsc, \log_2 n - 1$: Conditioned on any defective test placements $\Tc_S$, any given non-defective node at level $\ell$ has probability at most $\frac{1}{C}$ of being placed in a positive test.
\end{lemma}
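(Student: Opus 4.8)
The plan is to exploit the fact that conditioning on the defective test placements $\Tc_S$ freezes exactly which of the $Ck$ tests at level $\ell$ are positive, after which the claim reduces to a uniform-placement argument. First I would observe that a test at level $\ell$ is positive if and only if it contains at least one defective node, and that the identity of these positive tests is entirely determined by $\Tc_S$: the outcomes depend only on which of the $Ck$ tests the defective groups were assigned to, and this assignment is precisely what $\Tc_S$ records.

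Next I would bound the number of positive tests. At level $\ell \ge \log_2 k$ there are at most $k$ defective nodes, since each of the $k$ defective items lies in exactly one group $G^{(\ell)}_j$ at this level, so the number of distinct defective groups is at most $k$. Each defective group is placed into a single test, so the set of positive tests is contained in the image of at most $k$ defective nodes under the placement map, and therefore has cardinality at most $k$.

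Finally I would invoke independence. By construction, the random test assignment of any non-defective group is made uniformly over the $Ck$ tests, and the per-group choices are independent across groups; in particular the choice for a non-defective group is independent of the choices for the defective groups. Hence, conditioned on $\Tc_S$, a fixed non-defective node at level $\ell$ lands in each of the $Ck$ tests with probability $\frac{1}{Ck}$, and combining this with the bound of at most $k$ positive tests yields $\PP[\text{positive} \mid \Tc_S] \le \frac{k}{Ck} = \frac{1}{C}$.

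I do not expect any real obstacle here; the only point requiring a moment's care is the independence claim, namely that the uniform placement of the non-defective node is unaffected by conditioning on $\Tc_S$. This holds because the random test selections are made independently for each group, so conditioning on the selections of the defective groups leaves the selection of the given non-defective group uniform over all $Ck$ tests.
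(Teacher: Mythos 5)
Your proposal is correct and follows essentially the same route as the paper's own proof: bound the number of positive tests at level $\ell$ by $k$ (since at most $k$ defective nodes each occupy a single test), then use the independence and uniformity of the non-defective node's placement over the $Ck$ tests to conclude the probability is at most $\frac{k}{Ck} = \frac{1}{C}$. Your added care about why conditioning on $\Tc_S$ leaves the non-defective node's placement uniform is a point the paper treats implicitly, but it is the same argument.
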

\begin{proof}
    Since there are $k$ defective items, at most $k$ nodes at a given level can be defective, and since each node is placed in a single test, at most $k$ tests out of the $Ck$ tests at the given level can be positive.   Since the test placements are independent and uniform, it follows that for any non-defective node, the probability of being in a positive test is at most $\frac{1}{C}$.  
\end{proof}
In view of this lemma, when starting at any non-defective child of any given defective node (or alternatively, starting at a non-defective node at level $\ell_{\min}$), we can view any further branches down the non-defective sub-tree as ``continuing'' (i.e., the two children remain ``possibly defective'') with probability at most $\frac{1}{C}$.  Hence, we have the following.

\begin{lemma} \label{lem:reach}
    {\em (Probability of Reaching a Non-Defective Node)} Under the setup of Lemma \ref{lem:marking}, any given non-defective node having $\Delta$ non-defective ancestor nodes is reached (i.e., all of its ancestor nodes are placed in positive tests, so the node is considered possibly defective) with probability at most $\big( \frac{1}{C} \big)^{\Delta}$.
\end{lemma}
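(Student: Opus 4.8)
The plan is to apply Lemma \ref{lem:marking} one level at a time and to exploit the independence of the test placements across levels. First I would observe that the chain of ancestors of the given non-defective node contains exactly one node per level, running from level $\ell_{\min}$ up to the node's parent. Any defective ancestor is automatically placed in a positive test, since a test containing a defective node is necessarily positive; hence the event that the node is \emph{reached} is equivalent to the event that every \emph{non-defective} ancestor is placed in a positive test. By hypothesis there are exactly $\Delta$ such non-defective ancestors, and since each level contributes at most one ancestor, these $\Delta$ nodes lie at $\Delta$ distinct levels.

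Next I would invoke Lemma \ref{lem:marking}: conditioned on the defective test placements $\Tc_S$, each individual non-defective ancestor is placed in a positive test with probability at most $\frac{1}{C}$. The key structural observation is that, conditioned on $\Tc_S$, the set of positive tests at each level is completely determined (a level-$\ell$ test is positive precisely when it contains one of the fixed defective nodes), so the event that a given non-defective ancestor falls in a positive test depends only on the random placement of that single node at its own level. Because the algorithm draws a fresh, independent sequence of $Ck$ tests at every level and places each node uniformly and independently, the $\Delta$ relevant placement events---one per distinct level---are mutually independent given $\Tc_S$.

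Finally, I would simply multiply the per-level bounds: the probability that all $\Delta$ non-defective ancestors simultaneously land in positive tests equals the product of the individual probabilities, each at most $\frac{1}{C}$, which yields the claimed bound $\big( \frac{1}{C} \big)^{\Delta}$.

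The step I expect to require the most care is justifying the cross-level independence rigorously. One must check that ``being in a positive test'' is not an event coupling different levels through the outcomes; the cleanness comes entirely from conditioning on $\Tc_S$, which freezes the positive-test pattern at every level and leaves only each node's own level-local (and independent) placement as the remaining randomness. Once this is made precise, the bound follows immediately from Lemma \ref{lem:marking} together with independence, with no further estimation needed.
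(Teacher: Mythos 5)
Your proof is correct and matches the paper's reasoning: the paper derives Lemma \ref{lem:reach} directly from Lemma \ref{lem:marking} by viewing each non-defective ancestor as independently ``continuing'' with probability at most $\frac{1}{C}$, which is precisely your argument. Your write-up is simply a more explicit version of this, and your careful point---that conditioning on $\Tc_S$ freezes the positive-test pattern at every level, so only each node's own independent placement remains random---is exactly the justification the paper leaves implicit.
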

%
%
We will use the preceding lemmas to control the following two quantities:
\begin{itemize}
    \item $N_{\rm total}$, the total number of non-defective nodes that are reached  in the sense of Lemma \ref{lem:reach};
    \item $N_{\rm leaf}$, the number of non-defective leaf nodes that are reached, and thus are still considered possibly defective by the final level.
\end{itemize}
It will be useful to upper bound $N_{\rm total}$ for the purpose of controlling the overall decoding time, and to upper bound $N_{\rm leaf}$ for the purpose of controlling the number of items considered at the final level.


\subsubsection{Bounding $N_{\rm total}$ and $N_{\rm leaf}$ on average} \label{sec:bound_avg}

We first present two lemmas bounding the averages of $N_{\rm total}$ and $N_{\rm leaf}$, and then establish high-probability bounds.

\begin{lemma} \label{lem:Ntotal_avg}
    {\em (Bounding $N_{\rm total}$ on Average)} For any $C \ge 4$ and any defective test placements $\Tc_S$, we have
    \begin{equation}
        \EE[ N_{\rm total} \,|\, \Tc_S ] \le 6 k \log_2\frac{n}{k}.
    \end{equation}
\end{lemma}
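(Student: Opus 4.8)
The plan is to bound $\EE[N_{\rm total}\mid\Tc_S]$ by linearity of expectation, writing it as $\sum_{v}\PP[v \text{ reached}\mid\Tc_S]$ over all non-defective nodes $v$, and then invoking Lemma \ref{lem:reach} node by node. The structural observation that organizes this sum is that the non-defective nodes decompose into maximal complete binary subtrees hanging off the defective forest: every non-defective node $v$ has a unique topmost non-defective ancestor $w$, which is either a non-defective node at level $\ell_{\min}$ or a non-defective child of a defective node. Since any ancestor of a defective node is again defective (a larger group still contains the defective item), all ancestors of such a root $w$ are defective. Consequently, for a node $v$ lying $d$ levels below $w$, its non-defective ancestors are exactly the $d$ nodes on the path from $w$ to the parent of $v$, so Lemma \ref{lem:reach} gives $\PP[v \text{ reached}\mid\Tc_S]\le\big(\frac{1}{C}\big)^{d}$.

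First I would bound the expected number of reached nodes inside a single such subtree. Because each subtree is a complete binary tree extending to the leaf level, it contains at most $2^{d}$ nodes at depth $d$, each reached with probability at most $\big(\frac{1}{C}\big)^{d}$; summing the resulting geometric series gives $\sum_{d\ge0}\big(\frac{2}{C}\big)^{d}\le\frac{1}{1-2/C}\le 2$. This is precisely the step where the hypothesis $C\ge4$ is essential, as it guarantees $\frac{2}{C}\le\frac12$ and hence that the per-subtree contribution is a constant rather than growing with the number of levels.

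Next I would count the number of subtrees, i.e.\ the number of roots $w$. The roots sitting at level $\ell_{\min}$ number at most $k$. The remaining roots are non-defective children of defective nodes; since each defective node has at least one defective child it supplies at most one non-defective child, and there are at most $k$ defective nodes at each of the $\log_2\frac{n}{k}$ levels from $\ell_{\min}$ to $\log_2 n - 1$, yielding at most $k\log_2\frac{n}{k}$ such roots. Thus the number of subtrees is at most $k + k\log_2\frac{n}{k}\le 2k\log_2\frac{n}{k}$, using $\frac{n}{k}\ge2$ so that $\log_2\frac{n}{k}\ge1$. Multiplying the per-subtree bound of $2$ by this count gives $\EE[N_{\rm total}\mid\Tc_S]\le 4k\log_2\frac{n}{k}\le 6k\log_2\frac{n}{k}$, with the slack in the constant reflecting that we are not optimizing it.

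The main obstacle I anticipate is the combinatorial bookkeeping rather than any probabilistic difficulty: one must correctly see that the non-defective nodes partition into complete binary subtrees, verify that the number of non-defective ancestors of a node equals its depth within its own subtree (so that the exponent in Lemma \ref{lem:reach} is clean and independent of where the subtree attaches), and argue the $O\big(k\log\frac{n}{k}\big)$ bound on the number of subtree roots. Once this decomposition is set up, the remaining analysis reduces to the elementary geometric-series estimate enabled by $C\ge4$.
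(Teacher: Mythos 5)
Your proof is correct and follows essentially the same route as the paper's: both rest on linearity of expectation, Lemma \ref{lem:reach} applied according to the number of non-defective ancestors, a count of at most $k\log_2\frac{n}{k}$ defective nodes plus $k$ level-$\ell_{\min}$ roots, and the geometric series $\sum_{d}(2/C)^d$ controlled by $C \ge 4$. Your ``maximal non-defective subtree'' bookkeeping is a mildly sharper organization (noting each defective node contributes at most one non-defective child, giving constant $4$ where the paper allows two children and gets $6$), but the underlying argument is the same.
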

\begin{proof}
    Since the tree is binary, each defective node can have at most $2^{i}$ non-defective descendants appearing exactly $i$ levels further down the tree; such descendants correspond to $\Delta = i-1$ in Lemma \ref{lem:reach}.  Since there are at most $k \log_2\frac{n}{k}$ defective nodes, it follows that there are at most $\big( k \log_2\frac{n}{k} \big) 2^{\Delta+1}$ non-defective nodes with $\Delta$ non-defective ancestors and at least one defective ancestor.  Similarly, there are at most $k$ non-defective nodes at level $\ell_{\min} = \log_2 k$, each of which has at most $2^\Delta$ non-defective nodes appearing $\Delta$ levels later.

    Since Lemma \ref{lem:reach} demonstrates a probability $\big( \frac{1}{C} \big)^{\Delta}$ of being reached for a given number $\Delta$ of non-defective ancestors, it follows that
    \begin{equation}
        \EE[ N_{\rm total} \,|\, \Tc_S  ] \le \Big( k \log_2\frac{n}{k}\Big) \sum_{\Delta = 0}^{\log_2\frac{n}{k}-1} 2^{\Delta+1} \Big( \frac{1}{C} \Big)^{\Delta} + k\sum_{\Delta = 0}^{\log_2\frac{n}{k}} 2^{\Delta} \Big( \frac{1}{C} \Big)^{\Delta}.
    \end{equation}
    Hence, the assumption $C \ge 4$ gives
    \begin{equation}
        \EE[ N_{\rm total} \,|\, \Tc_S  ] \le \Big( 2 k \log_2\frac{n}{k}\Big) \sum_{\Delta = 0}^{\log_2\frac{n}{k}-1} 2^{- \Delta}  + k\sum_{\Delta = 0}^{\log_2\frac{n}{k}} 2^{- \Delta} \le 6 k \log_2\frac{n}{k}, \label{eq:simplifications}
    \end{equation}
    where we used $\sum_{\Delta=0}^{\infty} 2^{-\Delta} = 2$ and $\log_2 \frac{n}{k} \ge 1$ (for $k \le \frac{n}{2}$).
\end{proof}

\begin{lemma} \label{lem:Nleaf_avg}
    {\em (Bounding $N_{\rm leaf}$ on Average)}
    For any $C \ge 4$ and any defective test placements $\Tc_S$, we have
    \begin{equation}
        \EE[ N_{\rm leaf} \,|\, \Tc_S ] \le 6 k.
    \end{equation}
\end{lemma}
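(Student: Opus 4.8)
The plan is to mirror the proof of Lemma~\ref{lem:Ntotal_avg}, but to exploit the fact that all leaves sit at the single fixed level $\log_2 n$. This lets me replace the \emph{total} defective-node count $k\log_2\frac{n}{k}$ used there by the \emph{per-level} count $k$, which is exactly what removes the logarithmic factor and turns the $O\big(k\log\frac{n}{k}\big)$ bound into an $O(k)$ bound. First I would classify each non-defective leaf according to its number $\Delta$ of non-defective ancestors. Since defective nodes are closed upward (a node containing a defective has only defective ancestors, and conversely a non-defective node has only non-defective descendants), the ancestors of a leaf split into a top block of defective ancestors and a bottom block of $\Delta$ non-defective ancestors, with $\Delta \in \{0,\dots,\log_2\frac{n}{k}\}$.

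For each $\Delta \in \{0,\dots,\log_2\frac{n}{k}-1\}$, such a leaf has at least one defective ancestor, and its lowest defective ancestor lies at the fixed level $\log_2 n - \Delta - 1$. Because the defective items are disjoint across nodes sharing a common level, at most $k$ nodes at that level can be defective, and each such node has exactly $2^{\Delta+1}$ descendant leaves; hence there are at most $k\,2^{\Delta+1}$ leaves of this type. The remaining case $\Delta = \log_2\frac{n}{k}$ consists of leaves with no defective ancestor: each of the at most $k$ non-defective nodes at level $\ell_{\min}=\log_2 k$ has $2^{\Delta}=\frac{n}{k}$ leaf descendants, giving at most $k\,2^{\Delta}$ such leaves. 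Invoking Lemma~\ref{lem:reach}, a leaf with $\Delta$ non-defective ancestors is reached with probability at most $\big(\frac{1}{C}\big)^{\Delta}$, so
\begin{equation}
    \EE[ N_{\rm leaf} \,|\, \Tc_S ] \le \sum_{\Delta=0}^{\log_2\frac{n}{k}-1} k\,2^{\Delta+1}\Big(\frac{1}{C}\Big)^{\Delta} + k\,2^{\log_2\frac{n}{k}}\Big(\frac{1}{C}\Big)^{\log_2\frac{n}{k}}.
\end{equation}
For $C \ge 4$ the first sum is at most $2k\sum_{\Delta\ge 0}2^{-\Delta}=4k$, while the second term equals $k\,(2/C)^{\log_2\frac{n}{k}} \le k\,(1/2)^{\log_2\frac{n}{k}} = k^2/n \le \frac{k}{2}$ using $k\le\frac{n}{2}$. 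Together these yield $\EE[N_{\rm leaf}\,|\,\Tc_S] \le \tfrac{9}{2}k \le 6k$, as claimed.

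The only genuinely new ingredient relative to Lemma~\ref{lem:Ntotal_avg} is this per-level refinement in the counting step; the rest is a routine geometric-series estimate that converges precisely because $C \ge 4$ forces the ratio $2/C \le \tfrac12 < 1$. I do not expect a real obstacle here, but the step that most needs care is the bookkeeping in the counting: identifying the lowest defective ancestor as the correct object to sum over, verifying that fixing the leaf level pins this ancestor to level $\log_2 n - \Delta - 1$ (so that the per-level bound of $k$ applies), and keeping the no-defective-ancestor boundary case separate rather than folding it into the main sum.
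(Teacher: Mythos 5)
Your proof is correct and follows essentially the same route as the paper's: classify non-defective leaves by the number $\Delta$ of non-defective ancestors, bound the count by $k\,2^{\Delta+1}$ via the lowest defective ancestor, apply Lemma~\ref{lem:reach}, and sum the geometric series. The only cosmetic difference is in the boundary case $\Delta=\log_2\frac{n}{k}$, where you count $k\,2^{\Delta}$ leaves through the level-$\ell_{\min}$ nodes while the paper trivially bounds the number of such leaves by $n$; since $k\,2^{\log_2(n/k)}=n$, the two bounds are identical.
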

\begin{proof}
    Again using the fact that each defective node can have at most $2^{i}$ descendants appearing $i$ levels further down the tree, we find that there are at most $k 2^{\Delta+1}$ leaf nodes with $\Delta$ non-defective ancestors and at least one defective ancestor.  In addition, a leaf having only non-defective ancestors corresponds to $\Delta = \log_2\frac{n}{k}$ in Lemma \ref{lem:reach} (since $\ell_{\min} = \log_2 k$), and to handle this case, we trivially upper bound he number of leaves by $n$.  Hence, for $C \ge 4$, we have similarly to \eqref{eq:simplifications} that
    \begin{equation}
        \EE[ N_{\rm leaf} \,|\,\Tc_S ] \le 2 k \sum_{\Delta = 0}^{\log_2\frac{n}{k} -1} 2^{\Delta} \Big( \frac{1}{C} \Big)^{\Delta } + n\Big( \frac{1}{C} \Big)^{\log_2\frac{n}{k}} \le 6k. \label{eq:avg_leaf}
    \end{equation}
\end{proof}

\subsubsection{Bounding $N_{\rm total}$ with high probability} \label{sec:bound_Ntotal}

In this subsection, we prove the following.

\begin{lemma} \label{lem:Ntotal_hp}
    {\em (High-Probability Bound on $N_{\rm total}$)} For any $C \ge 16$, conditioned on any defective test placements $\Tc_S$, we have $N_{\rm total} = O\big( k \log\frac{n}{k} \big)$ with probability $1-e^{-\Omega(k \log \frac{n}{k})}$.
\end{lemma}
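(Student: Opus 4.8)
The plan is to show that $N_{\rm total}$ concentrates around its mean (bounded by $6k\log_2\frac{n}{k}$ in Lemma \ref{lem:Ntotal_avg}) by expressing it as a sum of independent contributions and applying a Chernoff-type bound. The key structural observation is that $N_{\rm total}$ counts non-defective nodes that are ``reached,'' and by Lemma \ref{lem:reach} each such node is reached only if all its non-defective ancestors were placed in positive tests. First I would organize the non-defective nodes by grouping them according to which defective (or level-$\ell_{\min}$) node they ``branch off'' from. For each such branching point, the sub-tree hanging below it is a binary tree explored in a percolation-like fashion: a node survives to the next level with probability at most $\frac{1}{C}$ (from Lemma \ref{lem:marking}), and these survival events are \emph{independent} across distinct nodes because the test placements of non-defective nodes are mutually independent under the conditioning on $\Tc_S$.

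The cleanest route is to dominate $N_{\rm total}$ stochastically by a sum over a branching/percolation process. Concretely, I would bound $N_{\rm total} \le \sum_{v} Z_v$, where $v$ ranges over an index set of size $O(k \log\frac{n}{k})$ of possible ``starting'' non-defective nodes (the non-defective children of defective nodes, plus the $k$ nodes at level $\ell_{\min}$), and $Z_v$ is the total number of nodes reached in the sub-tree rooted at $v$. Each $Z_v$ is stochastically dominated by the total progeny of a subcritical Galton--Watson process with offspring mean $\frac{2}{C} \le \frac{1}{8}$ (for $C \ge 16$), since each reached node has two children each surviving with probability at most $\frac{1}{C}$. A subcritical branching process has total progeny with an exponential tail, so each $Z_v$ has a sub-exponential (indeed geometric-type) tail with $O(1)$ mean. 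The sum of $m = O(k\log\frac{n}{k})$ independent sub-exponential variables, each with constant mean and constant sub-exponential parameter, concentrates: by a Bernstein/Chernoff bound for sums of independent sub-exponential random variables, $\PP\big[\sum_v Z_v \ge 2\,\EE[\sum_v Z_v]\big] \le e^{-\Omega(m)} = e^{-\Omega(k\log\frac{n}{k})}$, which is exactly the claimed tail.

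The main obstacle I anticipate is establishing genuine independence (or at least a valid stochastic-domination coupling) across the $Z_v$ terms, because the sub-trees rooted at different starting nodes $v$ are vertex-disjoint, so the survival events driving distinct $Z_v$ involve \emph{disjoint} sets of non-defective nodes and are therefore independent under $\PP[\cdot\,|\,\Tc_S]$. I would need to argue carefully that ``being reached'' depends only on the test placements of a node's own non-defective ancestors, so that the randomness feeding $Z_v$ is confined to the sub-tree below $v$ together with the already-conditioned path down to $v$; conditioning on the latter is harmless since Lemma \ref{lem:reach}'s per-node bound $\frac{1}{C}$ holds regardless. A subtlety is that the offspring ``probabilities'' are only upper bounds of $\frac{1}{C}$ rather than exact values, but this only helps, since a process with smaller offspring probabilities is stochastically dominated by the subcritical Galton--Watson process above; the domination lets me transfer its exponential progeny tail. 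Once the sub-exponential tail for each $Z_v$ and the independence are in hand, the concentration inequality is routine, and I would not belabor the constant factors.
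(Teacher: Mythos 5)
Your plan follows the same route as the paper's proof: decompose $N_{\rm total}$ into contributions from vertex-disjoint non-defective subtrees that branch off the defective tree (or sit at level $\ell_{\min}$), observe that these contributions are independent conditioned on $\Tc_S$, dominate each one by the total progeny of a subcritical branching process with offspring probability tied to $\frac{1}{C}$, and finish with a Bernstein-type concentration bound for sums of independent sub-exponential random variables. The only structural difference is that you cite the exponential tail of subcritical Galton--Watson total progeny as known, whereas the paper derives it explicitly via Dwass's formula (Lemma \ref{lem:branching}); that is a presentational difference, not a different approach.

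However, your final step has a genuine gap. You take the deviation proportional to the mean, obtaining $\PP\big[\sum_v Z_v \ge 2\,\EE[\sum_v Z_v]\big] \le e^{-\Omega(m)}$, and then assert $e^{-\Omega(m)} = e^{-\Omega(k\log\frac{n}{k})}$. This requires $m = \Omega\big(k\log\frac{n}{k}\big)$, but $m$ is only $O\big(k\log\frac{n}{k}\big)$: the number of branch-off subtrees is governed by the shape of the defective tree, which can be far smaller than $k\log_2\frac{n}{k}$. For instance, if the $k$ defective items are $k$ consecutive leaves (with $k \le \sqrt{n}$, say), the defective tree has only $O(k + \log n)$ nodes, hence $m = O(k+\log n)$; since each $Z_v \ge 1$ has $O(1)$ mean, $\EE[\sum_v Z_v] = \Theta(m)$ as well, and your bound degrades to $e^{-\Omega(k+\log n)}$. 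Concretely, for $k = \log n$ this gives only $e^{-\Omega(\log n)}$, i.e.\ polynomial in $\frac{1}{n}$, whereas the lemma claims $e^{-\Omega(\log^2 n)}$. The fix is exactly what the paper does in \eqref{eq:total_conc}: decouple the deviation from the mean and set $t = \Theta\big(k\log_2\frac{n}{k}\big)$ directly. Since the number of summands is $m = O\big(k\log\frac{n}{k}\big)$, the sub-exponential Bernstein exponent $\min\big\{t^2/(m K^2),\, t/K\big\}$ is still $\Omega\big(k\log\frac{n}{k}\big)$, and combining this with $\EE[N_{\rm total}\,|\,\Tc_S] = O\big(k\log_2\frac{n}{k}\big)$ from Lemma \ref{lem:Ntotal_avg} gives the claimed tail. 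With that one change, your argument is complete.
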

To prove this result, we make use of Lemmas \ref{lem:marking} and \ref{lem:Ntotal_avg}, along with the following auxiliary result written in generic notation.

\begin{lemma} \label{lem:branching}
    {\em (Sub-Exponential Behavior in a Branching Process)} Consider an infinite-depth binary tree in which each node is independently assigned a value $\{0,1\}$ with probability $q$ of being $1$, and let $N$ be the number of nodes whose ancestors are all marked as $1$.  Then, when $q \le \frac{1}{16}$, we have all $n > 0$ that $\PP[N = n] \le 2^{-(n-1)}$.
\end{lemma}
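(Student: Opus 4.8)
The plan is to recognize the reached set as the family tree of a Galton--Watson branching process and then count exactly. A node is counted in $N$ precisely when all of its strict ancestors are marked $1$; equivalently, starting from the root (which is always counted, since it has no ancestors), a counted node marked $1$ passes ``reached'' status to both of its children, while a counted node marked $0$ terminates that branch. Thus the set of reached nodes is exactly the genealogy of a branching process whose offspring count is $2$ with probability $q$ and $0$ with probability $1-q$, and $N$ is its total progeny. Since $2q \le \tfrac18 < 1$ the process is subcritical and $N$ is finite almost surely, so $\PP[N=n]$ is well defined for each finite $n$.

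The key structural observation is that every realization of the reached set is a \emph{full} binary subtree of the infinite tree: each reached node has either $0$ children (if marked $0$) or $2$ children (if marked $1$). A full binary tree with $m$ internal (marked-$1$) nodes has exactly $m+1$ leaves (marked-$0$), hence $N = 2m+1$ is always odd. For even $n$ this already gives $\PP[N=n] = 0 \le 2^{-(n-1)}$, so I only need to treat $n = 2m+1$.

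For odd $n = 2m+1$, I would decompose the event $\{N=n\}$ as a disjoint union over the possible shapes of the reached subtree. The number of ordered full binary trees with $m$ internal nodes is the Catalan number $\frac{1}{m+1}\binom{2m}{m}$, and each such shape occurs with probability $q^m(1-q)^{m+1}$ (its $m$ internal nodes must be marked $1$ and its $m+1$ leaves marked $0$, while all nodes outside the subtree are unconstrained, since each such node lies strictly below a marked-$0$ leaf of the subtree). Therefore $\PP[N = 2m+1] = \frac{1}{m+1}\binom{2m}{m}\, q^m (1-q)^{m+1}$.

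The final step is a one-line estimate: using $\frac{1}{m+1}\binom{2m}{m} \le \binom{2m}{m} \le 4^m$ and $(1-q)^{m+1} \le 1$ gives $\PP[N=2m+1] \le (4q)^m$, and the hypothesis $q \le \tfrac{1}{16}$ yields $4q \le \tfrac14$, so $(4q)^m \le 4^{-m} = 2^{-2m} = 2^{-(n-1)}$, as required. The only genuine content is the structural identification and Catalan enumeration in the second and third paragraphs; the rest is bookkeeping, and I expect no real obstacle beyond setting up that enumeration cleanly. A reader preferring to avoid the exact count could instead invoke the hitting-time (cycle lemma) representation of total progeny, $\PP[N=n] = \frac1n \PP\big[\sum_{i=1}^n \xi_i = n-1\big]$ with $\xi_i \in \{0,2\}$, which reproduces the same Catalan factor; the direct full-binary-tree count seems cleanest to write.
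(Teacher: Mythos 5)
Your proof is correct, and it reaches the paper's exact distribution by a genuinely more elementary route. The paper makes the same opening identification of $N$ with the total progeny of a Galton--Watson process with offspring distribution $P_X(2)=q$, $P_X(0)=1-q$, but then quotes Dwass's total-progeny formula $\PP[N=n]=\frac{1}{n}m_{n-1}^{(n)}$ and extracts the coefficient $m_{n-1}^{(n)}$ of $s^{n-1}$ from the binomial expansion of $(M_X(s))^n=\big((1-q)+qs^2\big)^n$, obtaining $\PP[N=n]=\frac{1}{n}\binom{n}{(n-1)/2}\,q^{(n-1)/2}(1-q)^{(n+1)/2}$ for odd $n$ and $0$ for even $n$. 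Your Catalan enumeration yields the identical formula, since with $n=2m+1$ one has $\frac{1}{n}\binom{n}{m}=\frac{1}{m+1}\binom{2m}{m}$, and your final estimate ($C_m\le 4^m$, $(1-q)^{m+1}\le 1$, then $4q\le\frac14$) parallels the paper's ($\binom{n}{(n-1)/2}\le 2^n$, $(1-q)^{(n+1)/2}\le 1$, then $\sqrt{4q}\le\frac12$). What your version buys is self-containedness: the only external ingredient is the standard count of full binary trees by Catalan numbers, and your justification of the shape probability $q^m(1-q)^{m+1}$ — in particular the observation that every node outside the subtree is screened off by a marked-$0$ leaf, so its mark is unconstrained — is exactly the point that needs care, and you state it correctly. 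What the paper's route buys is machinery that applies to arbitrary offspring distributions with no tree enumeration (the cycle-lemma representation you mention at the end is precisely the cited tool). Either argument validly supports the downstream use of the lemma.
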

\begin{proof}
    We make use of branching process theory \cite[Ch.~XII]{Fel57}, in particular noting that $N$ equals the {\em total progeny} \cite{Dwa69} when the initial population size is $1$ (i.e., the root) and the distribution of the number of children per node is given by
    \begin{equation}
        P_X(0) = 1-q, \quad P_X(2) = q, \quad P_X(x) = 0,\, \forall x \notin \{0,2\}.
    \end{equation}
    The results we use from branching process theory are expressed in terms of the generative function $M_X(s) = \EE[s^X]$, which is given by
    \begin{equation}
        M_X(s) = (1-q) + qs^2. \label{eq:MX}
    \end{equation}
    It is evident in our case that $N < \infty$ with probability one provided that $q < \frac{1}{2}$, and one way to formally prove this is to utilize the general result that the {\em extinction probability} $\PP[N < \infty]$ equals one provided that the smallest root of $s = M_X(s)$ is $s=1$ \cite[Sec.~XII.4]{Fel57}.  
    
    We utilize an exact expression for the distribution of $N$ for general branching processes  \cite{Dwa69} (specialized to the case that the initial population size is one):
    \begin{equation}
        \PP[N = n] = \frac{1}{n} m_{n-1}^{(n)}, \label{eq:p_branching}
    \end{equation}
    where $m_{n-1}^{(n)}$ is computed according to the expansion
    \begin{equation}
        \big( M_X(s) \big)^n = \sum_{j=0}^{\infty} m_j^{(n)} s^j. \label{eq:Mxn}
    \end{equation}
    Substituting our expression \eqref{eq:MX} for $M_X(s)$ on the left-hand side, we find that $(M_X(s))^n = \big( (1-q) + qs^2 \big)^n$, which equals $\sum_{i=0}^n {n \choose i} (1-q)^{n-i} (qs^2)^i$ by a binomial expansion.  Hence, we obtain $m_{n-1}^{(n)} = 0$ for even-valued $n$, and for odd-valued $n$ we substitute $i = \frac{n-1}{2}$ to obtain
    \begin{align}
        m_{n-1}^{(n)} 
            &= {n \choose \frac{n-1}{2}} (1-q)^{\frac{n+1}{2}} q^{\frac{n-1}{2}} \\
            &\le 2 (2 \sqrt{q})^{n-1},  \label{eq:mn3}
    \end{align}
    where we applied ${n \choose \frac{n-1}{2}} \le 2^n = 2 \cdot 2^{n-1}$ and $(1-q)^{\frac{n+1}{2}} \le 1$.
    
    Substituting \eqref{eq:mn3} into \eqref{eq:p_branching}, we obtain 
    \begin{equation}
        \PP[N = n] \le \frac{2}{n}  \big( \sqrt{4q} \big)^{n-1} \openone\{n \text{ is odd}\}. \label{eq:pN}
    \end{equation}
    This implies $\PP[N = n] \le 2^{-(n-1)}$ when $q \le \frac{1}{16}$ and $n \ge 2$, and the same trivially holds when $n=1$.
\end{proof}

The condition $\PP[N = n] \le 2^{-(n-1)}$ in Lemma \ref{lem:branching} implies that $N$ is a sub-exponential random variable, and the same trivially follows for a branching process that only runs up to a certain depth (number of generations).
In the group testing setup of Lemma \ref{lem:Ntotal_hp}, we are adding together $O\big( k \log\frac{n}{k} \big)$ independent copies of such random variables (each corresponding to a different non-defective sub-tree) to get $N_{\rm total}$.  As a result, we can apply a standard concentration bound for sums of independent sub-exponential random variables \cite[Prop.~5.16]{Ver10} to obtain 
\begin{equation}
    \PP[ N_{\rm total} \ge \EE[N_{\rm total}\,|\,\Tc_S] + t \,|\, \Tc_S] \le e^{-\Omega(\min\{ t^2/(k\log_2\frac{n}{k}), t \})}, \label{eq:total_conc}
\end{equation}
from which Lemma \ref{lem:Ntotal_hp} follows by setting $t = \Theta\big(k \log_2\frac{n}{k}\big)$ and using $\EE[N_{\rm total}\,|\,\Tc_S] = O\big(k \log_2\frac{n}{k}\big)$ (see Lemma \ref{lem:Ntotal_avg}).  The condition $C \ge 16$ coincides with $q \le \frac{1}{16}$ in Lemma \ref{lem:branching}.

\subsubsection{Bounding $N_{\rm leaf}$ with high probability} \label{sec:bound_Nleaf}

In this subsection, we prove the following.

\begin{lemma} \label{lem:Nleaf_hp}
    {\em (High-Probability Bound on $N_{\rm leaf}$)} 
    For any $C \ge 12$, conditioned on any defective test placements $\Tc_S$, we have $N_{\rm leaf} = O(k)$ with probability $1-e^{-\Omega(k)}$.
\end{lemma}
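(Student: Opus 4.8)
The plan is to condition on the defective test placements $\Tc_S$ and exploit the independence this induces. As in the proof of Lemma \ref{lem:marking}, once $\Tc_S$ is fixed the positive tests at each level are determined (a test is positive exactly when it contains a defective node, and there are at most $k$ such tests), so each non-defective node lands in a positive test independently with probability $q \le \frac1C$. I would therefore decompose the reached non-defective nodes according to the \emph{maximal non-defective subtrees}, each rooted either at a non-defective level-$\ell_{\min}$ node or at a non-defective child of a defective node; since these subtrees involve disjoint sets of non-defective nodes, the associated random quantities are mutually independent. Indexing them by $i$ and letting $L_i$ denote the number of reached leaves in subtree $i$, I have $N_{\rm leaf} = \sum_i L_i$ with the $L_i$ independent.

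The crucial reformulation is that, within a subtree, the reached nodes evolve as a subcritical Galton--Watson process: a reached node that lands in a positive test (probability $q$) passes both of its children on as reached, and otherwise passes on none, so the offspring law is $2\cdot\Bernoulli(q)$ with probability generating function $f(s) = (1-q)+qs^2$ as in \eqref{eq:MX} and mean $2q \le \frac16$ for $C \ge 12$. If the subtree has depth $\Delta_i$ from its root to the leaf level, then $L_i$ is exactly the generation-$\Delta_i$ population size $Z_{\Delta_i}$, started from $Z_0 = 1$.

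I would then control the moment generating function of each $L_i$ via the composition identity $g_\Delta = f \circ g_{\Delta-1}$ for $g_\Delta(\theta) := \EE[\theta^{Z_\Delta}]$. Setting $\theta = 2$ and $h_\Delta := g_\Delta(2) - 1 \ge 0$ gives $h_0 = 1$ together with the recursion $h_\Delta = q\,h_{\Delta-1}(2 + h_{\Delta-1})$; since $q \le \frac1{12}$ keeps $h_\Delta \le 3q\,h_{\Delta-1} \le h_{\Delta-1} \le 1$ for every $\Delta$, one obtains the geometric decay $h_\Delta \le (3q)^\Delta$, i.e.\ $\EE[2^{L_i}] \le 1 + (3q)^{\Delta_i}$. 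By independence,
\[
    \EE\big[2^{N_{\rm leaf}}\big] = \prod_i \EE\big[2^{L_i}\big] \le \exp\Big(\sum_i (3q)^{\Delta_i}\Big),
\]
and the same level-counting used in Lemma \ref{lem:Nleaf_avg} (at most $2k$ subtrees of each depth $\Delta \in \{0,\dots,\log_2\frac nk - 1\}$, together with at most $k$ subtrees of depth $\log_2\frac nk$ rooted at level $\ell_{\min}$) yields $\sum_i (3q)^{\Delta_i} = O(k)$ because $3q \le \frac14$. A Chernoff bound $\PP[N_{\rm leaf} \ge a] \le 2^{-a}\,\EE[2^{N_{\rm leaf}}]$ with $a = \Theta(k)$ chosen large enough then gives $N_{\rm leaf} = O(k)$ with probability $1 - e^{-\Omega(k)}$.

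The main obstacle is to obtain the correct tail rate despite the large number, $O\big(k\log\frac nk\big)$, of subtrees. A crude argument---bounding each $L_i$ uniformly by a constant-parameter sub-exponential variable via Lemma \ref{lem:branching} and applying a generic sum-of-sub-exponentials concentration---would only yield $e^{-\Omega(k/\log(n/k))}$, since the relevant sum of squared sub-exponential norms scales as $O\big(k\log\frac nk\big)$. The key point is that deep subtrees almost never reach a leaf, so their contribution to the MGF (equivalently, to the variance) decays geometrically in the depth $\Delta_i$; capturing this decay through the generation-size bound $(3q)^{\Delta_i}$---and thereby collapsing $\sum_i(3q)^{\Delta_i}$ from $O\big(k\log\frac nk\big)$ to $O(k)$, exactly paralleling the average computation in Lemma \ref{lem:Nleaf_avg}---is what delivers the $e^{-\Omega(k)}$ rate. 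An alternative would be to verify a Bernstein moment condition using $\sum_i\EE[L_i^2]=O(k)$ and apply a variance-based Bernstein inequality, but the direct MGF route via the Galton--Watson recursion is cleaner.
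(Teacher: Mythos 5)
Your proof is correct, and it takes a recognizably different route from the paper's, even though both hinge on the same key insight: a non-defective subtree hanging at depth $\Delta$ below the defective tree contributes to the moment generating function only an amount that decays geometrically in $\Delta$, which is exactly what upgrades the naive rate $e^{-\Omega(k/\log\frac{n}{k})}$ to $e^{-\Omega(k)}$ (your closing discussion of this point is accurate). The differences are in the architecture. You decompose $N_{\rm leaf}$ over the \emph{maximal non-defective subtrees} directly, which are node-disjoint and hence independent given $\Tc_S$, so no double-counting can arise; you then treat each subtree as a depth-limited Galton--Watson process with offspring pgf $(1-q)+qs^2$ and obtain $\EE[2^{L_i}] \le 1+(3q)^{\Delta_i}$ from the pgf composition recursion, after which a one-line Chernoff bound with $\sum_i (3q)^{\Delta_i} = O(k)$ finishes the job, with no external concentration result and no separate mean bound needed. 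The paper instead splits $N_{\rm leaf}=N'_{\rm leaf}+N''_{\rm leaf}$, groups the hanging subtrees by defective path (at most one subtree per height per path), proves the per-height MGF bound $\EE[e^{\lambda N_h}] \le 1+4^{-h}$ via an induction-on-height tail bound (Lemma \ref{lem:leaf_tail}: $\PP[N_h \ge t] \le 4^{-(h+t)}$), must then explicitly handle the possible overlap of leaf sets across the $k$ paths by assigning shared leaves to a single path (using monotonicity of the MGF under removing leaves), and finally sums $2k$ independent sub-exponential variables via the cited standard concentration inequality together with the mean bound of Lemma \ref{lem:Nleaf_avg}. What your route buys is the elimination of the overlap bookkeeping and a self-contained argument whose branching-process machinery parallels the paper's own treatment of $N_{\rm total}$ (Lemma \ref{lem:branching}); what the paper's route buys is reuse of generic, separately stated tree lemmas and off-the-shelf sub-exponential concentration. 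One small point to make explicit in your write-up: the positive-test probability is not a single $q$ but varies by level, being only uniformly bounded by $\frac{1}{C}$; since your recursion bound $h_\Delta \le 3q\,h_{\Delta-1}$ is monotone in the per-level marking probability, this is harmless (the paper's generic lemmas elide the same issue), but it deserves a sentence.
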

%
%
We again use Lemma \ref{lem:marking}, along with the following auxiliary results written in generic notation.

\begin{lemma} \label{lem:leaf_tail}
    {\em (Tail Bound on Binary Tree Paths)}
    Consider a binary tree of height $h$ in which each node is independently assigned a value $\{0,1\}$ with probability $q$ of being $1$, and let $N_h$ be the number of leaf nodes that have a path of $1$'s back to the root (including both endpoints). Then $\PP[N_h \ge t] \le 4^{-(h+t)}$ for any integer $t \ge 1$.
\end{lemma}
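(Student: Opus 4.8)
The plan is to prove the stronger deterministic--combinatorial bound obtained from a union bound over subsets of leaves, and then establish that bound by induction on the height $h$. For a set $L$ of leaves, let $U(L)$ denote the set of all nodes lying on some root-to-leaf path of a leaf in $L$ (the union of these paths, which is a subtree containing the root). Since the event $\{N_h \ge t\}$ forces some size-$t$ set of leaves to each possess an all-ones path to the root, and the node labels are independent,
\[
\PP[N_h \ge t] \;\le\; \sum_{\substack{L \subseteq \text{leaves}\\ |L| = t}} \PP[\text{every node in } U(L) \text{ is labeled } 1] \;=\; \sum_{|L|=t} q^{|U(L)|} \;=:\; a_{h,t}.
\]
It therefore suffices to show $a_{h,t} \le 4^{-(h+t)}$ for all $t \ge 1$, with the convention $a_{h,0} := 1$.

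The key structural observation is a clean recursion for $a_{h,t}$. For $h \ge 1$, the root lies in $U(L)$ whenever $L \ne \emptyset$, contributing one factor of $q$; the remaining leaves split between the two height-$(h-1)$ subtrees, whose path-unions are disjoint as node sets and whose labels are independent. Summing over how many of the $t$ leaves fall in the left subtree gives the convolution
\[
a_{h,t} \;=\; q \sum_{i=0}^{t} a_{h-1,i}\, a_{h-1,t-i}, \qquad t \ge 1,
\]
with base case $a_{0,1} = q$ and $a_{0,t} = 0$ for $t \ge 2$ (the height-$0$ tree is a single leaf).

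I would then prove $a_{h,t} \le 4^{-(h+t)}$ ($t \ge 1$) by induction on $h$, assuming $q \le \tfrac{1}{12}$ (the regime in which the lemma is applied, since $C \ge 12$ yields $q \le \tfrac1C \le \tfrac1{12}$ through Lemma~\ref{lem:marking}). The base case $h=0$ reduces to $q \le \tfrac14$. For the inductive step, I split the convolution into the two boundary terms $i \in \{0,t\}$, which each equal $a_{h-1,t}$ and are bounded by $4^{-(h-1+t)}$ by the hypothesis, and the $t-1$ interior terms $i \in \{1,\dots,t-1\}$, each bounded by $4^{-(h-1+i)}4^{-(h-1+t-i)} = 4^{-(2(h-1)+t)}$. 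Factoring out $4^{-(h-1+t)}$ yields
\[
a_{h,t} \;\le\; q\,4^{-(h-1+t)}\Big[\,2 + (t-1)\,4^{-(h-1)}\Big].
\]

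The main obstacle, and the crux of the whole argument, is controlling the interior contribution $(t-1)4^{-(h-1)}$, which a priori grows linearly in $t$. Here I would use the deterministic fact that a height-$h$ binary tree has only $2^h$ leaves, so $t \le 2^h$; a short check shows $(t-1)4^{-(h-1)} \le (2^h-1)4^{-(h-1)} \le 1$ for every $h \ge 1$ (the bound is tight at $h=1,\,t=2$). Hence the bracket is at most $3$, and $a_{h,t} \le 3q\,4^{-(h-1+t)} \le \tfrac14\,4^{-(h-1+t)} = 4^{-(h+t)}$ exactly when $q \le \tfrac1{12}$, closing the induction. An equivalent route avoiding the subset union bound is to iterate the distributional identity $\PP[N_h \ge t] = q\,\PP[W_L + W_R \ge t]$, where $W_L,W_R$ are independent copies of $N_{h-1}$, and to bound the tail of the convolution via the same inductive hypothesis; this leads to the identical inequality $q\,[2+(t-1)4^{-(h-1)}] \le \tfrac14$ and hence the same threshold on $q$.
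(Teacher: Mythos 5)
Your proof is correct, and it pins down the right hypothesis $q \le \frac{1}{12}$ (the lemma statement omits it, but the paper's own proof assumes it, and it matches $C \ge 12$ in Lemma \ref{lem:Nleaf_hp}). Your route differs from the paper's in the object of the induction. The paper inducts directly on the tail probabilities $\PP[N_h \ge t]$: it splits the event over the two subtrees, paying a union bound over the split point $j$ at \emph{every} level of the induction, with the factor $q$ accounting for the root; after factoring, its inductive bound $8q\,4^{-(h+t)} + 16qt\,4^{-(2h+t)}$ is exactly your $q\,4^{-(h-1+t)}\bigl[2 + (t-1)4^{-(h-1)}\bigr]$ with $t-1$ replaced by $t$, and both arguments close via $t \le 2^h$ at precisely the threshold $q \le \frac{1}{12}$. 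What you do differently is pay a single union bound at the outset, over size-$t$ leaf sets, which converts the problem into bounding the deterministic quantity $a_{h,t} = \sum_{|L|=t} q^{|U(L)|}$ --- equivalently the binomial moment $\EE\bigl[\binom{N_h}{t}\bigr]$ --- for which your convolution recursion is an exact identity rather than an inequality. This buys a cleaner induction (all randomness is confined to one step, and you in fact prove the slightly stronger statement that every binomial moment is at most $4^{-(h+t)}$), plus a more elementary base case at $h=0$; incidentally, the paper's $h=1$ base case contains typos ($\frac{1}{q^3}$ and $\frac{2}{q^2}$ should read $q^3$ and $2q^2$), which your formulation sidesteps entirely. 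The paper's formulation, in turn, avoids introducing any auxiliary combinatorial quantity and tracks the probability of interest throughout; it is precisely the ``equivalent route'' you sketch in your closing remark, so your proposal effectively subsumes both arguments.
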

\begin{proof}
    We use a proof by induction.  The base case is $h=1$, in which case we have $\PP[N_h > 2] = 0$, $\PP[N_h = 2] = \frac{1}{q^3}$, and $\PP[N_h \ge 1] \le \frac{2}{q^2}$ by the union bound.  These bounds satisfy the claim of the lemma due to the assumption $q \le \frac{1}{12}$.

    Now fix $h \ge 2$ and suppose that the claim is true for height $h-1$.  Let $L$ and $R$ be the number of leaf nodes reached in the left and right sub-trees of the root, and observe that
    \begin{align}
        \PP[N_h \ge t] 
            &\le \sum_{j=0}^t \PP[L \ge j \cap R \ge t-j] \\
            &= \PP[L \ge t] + \PP[R \ge t] + \sum_{j=1}^{t-1} \PP[L \ge j \cap R \ge t-j]. \label{eq:splitLR}
    \end{align}
    Applying the induction hypothesis, we find that the first two terms are at most $q 4^{-(h-1+t)} = 4q \cdot 4^{-(h+t)}$, and for $1 \le j \le t-1$, we have $\PP[L \ge j \cap R \ge t-j] \le q \cdot 4^{-(h-1+j)} \cdot 4^{-(h-1+t-j)} = 16 q \cdot 4^{-(2h+t)}$, where the multiplications by $q$ correspond to the root node being marked as $1$.  Substituting back into \eqref{eq:splitLR} gives
    \begin{equation}
        \PP[N_h \ge t] \le 8q 4^{-(h+t)} + 16 q t 4^{-(2h+t)}.
    \end{equation}
    We may assume that $t \le 2^{h}$, since for $t > 2^{h}$ we trivially have $\PP[N_h \ge t] = 0$.  Hence, the above bound simplifies to
    \begin{equation}
        \PP[N_h \ge t] \le 8q 4^{-(h+t)} + 16 q 4^{-(h+t)} 2^{-h} \le 4^{-(h+t)},
    \end{equation}
    since $h \ge 2$ implies $2^{-h} \le \frac{1}{4}$, and we have assumed $q \le \frac{1}{12}$.
\end{proof}

\begin{lemma} \label{lem:leaf_subexp}
    {\em (Sub-Exponential Behavior for Binary Tree Paths)}
    Under the setup of Lemma \ref{lem:leaf_tail}, $N_h$ satisfies $\EE[e^{\lambda N_h}] \le 1 + 4^{-h}$ for $\lambda \le \log 2$. In addition, for any integer $h_{\max} \ge 1$, if $N_1,\dotsc,N_{h_{\max}}$ are independent random variables with the same distribution as $N_h$ for the specified height, then $N = \sum_{h=1}^{h_{\max}} N_h$ satisfies $\EE[e^{\lambda N}] \le 2$ for $\lambda \le \log 2$.
\end{lemma}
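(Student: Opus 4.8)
The plan is to prove the two claims in sequence: first derive the single-tree moment generating function (MGF) bound directly from the tail bound of Lemma~\ref{lem:leaf_tail}, and then obtain the bound on the sum by independence.

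For the first claim, I would express the MGF in terms of tail probabilities rather than the pointwise distribution. Since $N_h$ is a nonnegative integer-valued random variable, summation by parts (Abel summation) applied to $f(t) = e^{\lambda t}$ gives
\begin{equation}
    \EE[e^{\lambda N_h}] = 1 + (e^\lambda - 1)\sum_{t=1}^\infty e^{\lambda(t-1)}\, \PP[N_h \ge t].
\end{equation}
This form conveniently isolates the $t=0$ contribution (equal to $1$) from the tail, which matters because Lemma~\ref{lem:leaf_tail} is only stated for $t \ge 1$. Substituting $\PP[N_h \ge t] \le 4^{-(h+t)}$ and factoring out $4^{-h}$ leaves a geometric series with ratio $e^\lambda/4$. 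The constraint $\lambda \le \log 2$ forces $e^\lambda \le 2$, so the ratio is at most $\frac12$, the series converges, and $(e^\lambda - 1) \le 1$; carrying out the summation yields $\EE[e^{\lambda N_h}] \le 1 + \frac12 \cdot 4^{-h}$, which is even stronger than the claimed $1 + 4^{-h}$.

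For the second claim, independence of $N_1,\dotsc,N_{h_{\max}}$ factorizes the MGF of the sum, so that $\EE[e^{\lambda N}] = \prod_{h=1}^{h_{\max}} \EE[e^{\lambda N_h}] \le \prod_{h=1}^{h_{\max}}\big(1 + 4^{-h}\big)$. Applying $1 + x \le e^x$ and summing the geometric series $\sum_{h=1}^\infty 4^{-h} = \frac13$ gives $\EE[e^{\lambda N}] \le e^{1/3} \le 2$, uniformly in $h_{\max}$.

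Neither step presents a serious obstacle; this lemma is essentially a calculation once the tail bound is in hand. The only points requiring care are (i) correctly handling the $t=0$ term, which is nonzero (since $N_h$ can vanish) and is not covered by the tail bound---the summation-by-parts form handles this automatically---and (ii) verifying convergence of the geometric series, which is exactly where the hypothesis $\lambda \le \log 2$, equivalently $e^\lambda \le 2 < 4$, enters. The comfortable slack in both bounds ($\frac12 4^{-h}$ versus $4^{-h}$, and $e^{1/3} \approx 1.40$ versus $2$) indicates the constants are loose, consistent with the paper's stated choice not to optimize them.
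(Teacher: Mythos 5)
Your proposal is correct and takes essentially the same approach as the paper: where you use Abel summation to express the MGF via tail probabilities, the paper simply bounds $\PP[N_h = t] \le \PP[N_h \ge t] \le 4^{-(h+t)}$ and sums the same geometric series directly, obtaining $1 + 4^{-h}\cdot\frac{e^\lambda}{4 - e^\lambda} \le 1 + 4^{-h}$ for $\lambda \le \log 2$. Your second step (factorizing by independence, applying $1 + x \le e^x$, and bounding $\sum_{h\ge 1} 4^{-h} \le \frac{1}{3}$ to get $e^{1/3} \le 2$) is identical to the paper's.
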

\begin{proof}
    We seek to upper bound $\EE[e^{\lambda N_h}] = \sum_{t=0}^{\infty}\PP[N_h = t] e^{\lambda t}$.  Separating out the $t=0$ term and applying Lemma \ref{lem:leaf_tail}, we obtain $\EE[e^{\lambda N_h}] \le 1 + \sum_{t=1}^{\infty}4^{-(h+t)}e^{\lambda t} = 1 + 4^{-h}\sum_{t=1}^{\infty}e^{(\lambda - \log 4)t} = 1 + 4^{-h} \cdot \frac{e^{\lambda}}{4 - e^{\lambda}}$ for $\lambda \in [0,\log4)$.  In particular, if $\lambda \le \log 2$, then $\EE[e^{\lambda N_h}] \le 1 + 4^{-h}$.

    For the second part, we again consider $\lambda \le \log 2$, and use the independence assumption to write $\EE[e^{\lambda N}] = \prod_{h=1}^{h_{\max}} \EE[e^{\lambda N_h}] \le \prod_{h=1}^{h_{\max}}\big( 1 + 4^{-h}\big) \le e^{\sum_{h=1}^{h_{\max}} 4^{-h}  } \le e^{1/3} \le 2$.
\end{proof}
We now consider the following decomposition of $N_{\rm leaf}$:
\begin{equation}
    N_{\rm leaf} = N'_{\rm leaf} + N''_{\rm leaf}, \label{eq:N_decomp}
\end{equation}
where $N'_{\rm leaf}$ counts the reached non-defective leaf nodes having at least one defective ancestor, and $N''_{\rm leaf}$ counts the reached non-defective leaf nodes having all non-defective ancestors.  
In the case of a single defective item (i.e., $k=1$), we claim that $N'_{\rm leaf}$ has the same distribution as $2N$, with $N$ given in Lemma \ref{lem:leaf_subexp} for a suitably-chosen value of $h_{\max}$ and $q = \frac{1}{C}$.  To see this, we identify the leaf nodes in Lemma \ref{lem:leaf_subexp} with nodes at the {\em second-last} level of the tree illustrated in Figure \ref{fig:tree}, and observe that the factor of $2$ arises since every such node produces two children at the final level (hence contributing to $N'_{\rm leaf}$) when placed in a positive test.

In the case of $k$ defective items, some care is needed, as the relevant sets of leaves associated with the $k$ defective paths may overlap, potentially creating complicated dependencies between the associated random variables.  However, if we take the definition of $N$ in Lemma \ref{lem:leaf_subexp} and remove some leaves from the count, the quantity $\EE[e^{\lambda N}]$ can only decrease further, so the conclusion $\EE[e^{\lambda N}] \le 2$ remains valid.
Hence, by counting any overlapping leaves only once (in an otherwise arbitrary manner), we can form $k$ independent random variables $N^{(1)},\dotsc,N^{(k)}$ satisfying $2\sum_{i=1}^k N^{(i)} \stackrel{\rm d}{=} N'_{\rm leaf}$ and $\EE[e^{\lambda N^{(i)}}] \le 2$.  In addition, since there are $k$ nodes at level $\ell_{\min} = \log_2 k$, we can similarly form $k' \le k$ independent random variables $N^{(k+1)},\dotsc,N^{(k+k')}$ satisfying $\sum_{i=k+1}^{k+k'} N^{(i)} \stackrel{\rm d}{=} N''_{\rm leaf}$ and $\EE[e^{\lambda N^{(i)}}] \le 2$. 

Thus, by \eqref{eq:N_decomp}, $N_{\rm leaf}$ is the sum of at most $2k$ independent sub-exponential random variables.  Again applying a standard concentration bound \cite[Prop.~5.16]{Ver10}, it follows that
\begin{equation}
    \PP[ N_{\rm leaf} \ge \EE[N_{\rm leaf}\,|\,\Tc_S] + t \,|\, \Tc_S] \le e^{-\Omega(\min\{ t^2/k, t \})}, \label{eq:leaf_conc}
\end{equation}
from which Lemma \ref{lem:Nleaf_hp} follows upon setting $t = \Theta(k)$ and using the fact that $\EE[N_{\rm leaf}\,|\,\Tc_S] = O(k)$ (see Lemma \ref{lem:Nleaf_avg}).

\subsubsection{Analysis of the final level} \label{sec:final_level}

Recall that at the final level, we perform $C' \log k$ independent sequences of $2k$ tests, with each item being randomly placed in one test in each sequence.  We study the error probability conditioned on the high-probability event that $N_{\rm leaf} = O(k)$ (see Lemma \ref{lem:Nleaf_hp})

For a given non-defective item and a given sequence of $2k$ tests, the probability of colliding with any defective item is at most $\frac{1}{2}$, similarly to Lemma \ref{lem:marking}.  Due to the $C' \log k$ repetitions, for any fixed $c' > 0$, there exists a choice of $C'$ yielding $O(k^{-c'})$ probability of a given non-defective appearing only in positive tests.  By a union bound over the $N_{\rm leaf} = O(k)$ non-defectives at the final level, we find that the estimate $\Shat$ equals $S$ with (conditional) probability $1 - O(k^{1-c'})$.

\subsubsection{Number of tests, error probability, decoding time, and storage} \label{sec:numbers}

The claims of Theorem \ref{thm:main1} are established as follows:
\begin{itemize}
    \item {\em Number of tests:} As stated in Section \ref{sec:description}, the number of tests is $t = Ck \log_2\frac{n}{k} + 2 C' k \log k$, which behaves as $O\big( k \log \frac{n}{k} + k \log k \big) = O(k \log n)$.
    \item {\em Error probability}: The concentration bounds on $N_{\rm leaf}$ and $N_{\rm total}$ (see Lemmas \ref{lem:Ntotal_hp} and \ref{lem:Nleaf_hp}) hold with probability $1-e^{-\Omega(k)}$ and $1-e^{-\Omega(k\log\frac{n}{k})}$ respectively, and we treat their complements as error events.  These terms are dominated by the final stage, which incurs $O(k^{1-c'})$ failure probability; setting $c' = c+1$ gives the $O(k^{-c})$ behavior stated in Theorem \ref{thm:main1}.
    \item {\em Decoding time}: We claim that conditioned on the the high-probability events $N_{\rm total} = O\big( k \log \frac{n}{k} \big)$ and $N_{\rm leaf} = O(k)$, the decoding time is $O\big(k \log \frac{n}{k} + k \log k\big) = O(k \log n)$.  The first term comes from considering all levels except the last, since it takes $O(1)$ time to check whether each node associated with $N_{\rm total}$ is in a positive or negative test.  While $N_{\rm total}$ only counts the non-defective nodes, the number of defective nodes also trivially behaves as $O\big(k \log \frac{n}{k}\big)$.  At the last level, for each of the $k + N_{\rm leaf} = O(k)$ relevant leaf nodes, we perform $O(\log k)$ such checks for a total time of $O(k \log k)$. 
    \item {\em Storage}: At the $\ell$-th level, we need to store $2^{\ell}$ integers indicating the test associated with each of the $2^{\ell}$ nodes.  Hence, excluding the last level, we need to store $\sum_{\ell=\log_2 k}^{\log_2 n - 1} 2^{\ell} = O(n)$ integers in $\{1,\dotsc,2k\}$, or $O(n \log k)$ bits.  Similarly, the $O(\log k)$ independent sequences of tests at the final level amount to storing $O(n \log k)$ integers, or $O(n (\log k)^2 )$ bits.  In addition, under the high-probability event $N_{\rm total} = O\big(k \log\frac{n}{k} \big) = O(k \log n)$, the storage of the possibly defective set requires $O(k \log n)$ integers, or $O(k \log^2 n)$ bits.  Since $k \le n$, this is no higher than $O(n \log^2 k)$.
\end{itemize}

%
%

\section{Storage Reductions via Hashing} \label{sec:storage}

A notable weakness of Theorem \ref{thm:main1} is that the storage required at the decoder is higher than linear in the number of items.  In this section, we present a variant of our algorithm with considerably lower storage that attains similar guarantees to Theorem \ref{thm:main1}.

The idea is to interpret the mappings $\big\{1,\dotsc,2^{\ell}\big\} \to \{1,\dotsc,Ck\}$ at each level as hash functions. Since the high storage in Theorem \ref{thm:main1} comes from explicitly storing the corresponding $2^{\ell} = O(n)$ values, the key to reducing the overall storage is to use lower-storage hash families.  The reduced storage comes at the expense of reduced independence between different hash values (e.g., only $r$-wise independence for some $r \ll n$), and the proof of Theorem \ref{thm:main1} utilizes full independence.  The modified algorithm in this section uses $\Theta(\log k)$-wise independent hash families, and we leave open the question of whether similar recovery guarantees can be attained with only $O(1)$-wise independence.

A second modification to the algorithm is that in order to attain the behavior $O(k^{-c})$ in the error probability similarly to Theorem \ref{thm:main1}, we consider the use of $\Ctil \ge 1$ independent repetitions at each level $\ell = \ell_{\min},\dotsc,\log_2 n -1$, in the same way that we already used repetitions at the final level.  This means that at each level, there are $\Ctil$ sequences of $Ck$ tests (each with a different and independent hash function), and a node is only considered to be possibly defective at a given level if {\em all} of its associated $\Ctil$ tests are positive.

\begin{remark*}
    This idea of using low-storage hash functions is reminiscent of the Bloom filter data structure \cite{Blo70}.  A direct approach to transferring Bloom filters to group testing is to perform $L = O(\log n)$ hashes of each item into one of $t = O(k \log n)$ tests, and then estimate the defective set to be the set of items only included in positive tests \cite[Sec.~1.7]{Ald19}.  By checking the $L$ test outcomes associated with each item, the defective set can be identified with low storage (depending on the hash family properties) under the preceding scaling laws.  However, a drawback of this direct approach is that checking all items separately takes $\Omega(n)$ time.  Our algorithm circumvents this via the binary splitting approach.
\end{remark*}


\subsection{Statement of Result} \label{sec:main_result2}

%

With the above-described modifications to the algorithm, we have the following counterpart to Theorem \ref{thm:main1}, which formalizes and generalizes Main Result \ref{mr2}.

\begin{theorem} \label{thm:main2}
    {\em (Algorithmic Guarantees with Reduced Storage)} Let $S$ be a fixed (defective) subset of $\{1,\dotsc,n\}$ of cardinality $k$.  For any constant $c > 0$, there exist choices of $C$, $\Ctil$, and $C'$ such that the above-described group testing algorithm adapted from Section \ref{sec:description}, with a $\Theta(\log k)$-wise independent hash family and $\Ctil$ independent repetitions per level, yields the following with probability at least $1 - O\big( \frac{\log n}{k^c} \big)$:
        \begin{itemize}
            \item The returned estimate $\Shat$ equals $S$;
            \item The algorithm runs in time $O(\Tsf_{\rm hash} k \log n)$, where $\Tsf_{\rm hash}$ is the evaluation time for one hash value;
            \item The algorithm uses $O(k \log n + \Ssf_{\rm hash} \log n)$ bits of storage, where $\Ssf_{\rm hash}$ is the number of bits of storage required for one hash function.
        \end{itemize}
        In addition, the number of tests scales as $O(k \log n)$.
\end{theorem}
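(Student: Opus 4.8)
The plan is to revisit the proof of Theorem \ref{thm:main1} and isolate the two places where full mutual independence of the test placements was used, replacing each by an argument that survives under $\Theta(\log k)$-wise independence, while using the $\Ctil$ repetitions to drive the per-node marking probability down to $q = (1/C)^{\Ctil}$. The header claims then follow by reaccounting resources.

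First I would check that the \emph{expectation} bounds carry over essentially unchanged. The only property of the placements used in Lemmas \ref{lem:marking} and \ref{lem:reach} is that a single non-defective node lands in a positive test with probability at most $1/C$ per repetition. Under a hash family this still holds: writing the marking event as $\{\exists\, d : h(v) = h(d)\}$ over the at most $k$ defective nodes $d$, a union bound together with pairwise independence gives $\PP[h(v)=h(d)] = 1/(Ck)$ and hence marking probability at most $1/C$, which becomes $q = (1/C)^{\Ctil}$ after $\Ctil$ independent repetitions. Since distinct levels use independent hash functions, the reach probability of a single node still factorizes across levels as $q^{\Delta}$, so $\EE[N_{\rm total}] = O\big(k \log\frac{n}{k}\big)$ and $\EE[N_{\rm leaf}] = O(k)$ exactly as in Lemmas \ref{lem:Ntotal_avg} and \ref{lem:Nleaf_avg} (with improved constants as $\Ctil$ grows).

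The substantive change is the high-probability control, since the sub-exponential concentration behind Lemmas \ref{lem:Ntotal_hp} and \ref{lem:Nleaf_hp} (via \eqref{eq:total_conc} and \eqref{eq:leaf_conc}) relied on full independence. I would instead argue level by level. Let $M_\ell$ denote the number of non-defective possibly-defective nodes at level $\ell$, and condition on the high-probability event $M_\ell = O(k)$, which depends only on the hashes at levels $\le \ell$. The children of the current possibly-defective nodes, together with the non-defective children of the at most $k$ defective nodes, form a candidate set of size $m = O(k)$ whose markings at level $\ell+1$ depend only on the fresh (independent) level-$(\ell+1)$ hashes, each occurring with probability at most $q$. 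I would bound the $r$-th factorial moment with $r = \Theta(\log k)$: expanding each indicator as $\{\exists\, d : h(v_i) = h(d_i)\}$ and unioning over the defective targets $(d_1,\dots,d_r)$ reduces $\EE\big[\prod_{i=1}^{r}\openone\{v_i \text{ marked}\}\big]$ to a sum of collision probabilities, each involving at most $2r$ nodes and hence evaluable as in the fully independent case under $\Theta(\log k)$-wise independence; the result is at most $q^{r}$ up to lower-order collision corrections. Applying Markov's inequality to $\binom{M_{\ell+1}}{r}$ then yields $\PP[M_{\ell+1} = \Omega(k)\mid M_\ell = O(k)] \le (O(q))^{r} = k^{-c'}$, where $c'$ is made large by choosing $C$, $\Ctil$, and the independence parameter. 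A union bound over the $O(\log n)$ levels gives the claimed $O(\log n / k^{c})$ failure probability and, on the good event, $N_{\rm total} = \sum_\ell M_\ell = O\big(k\log\frac{n}{k}\big)$ and $N_{\rm leaf} = O(k)$.

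The main obstacle is exactly this moment estimate: because the positive-test set is itself random and correlated with the non-defective placements, and because $k$ typically exceeds $\Theta(\log k)$ (so one cannot condition on all defective hash values without destroying the limited independence), one must avoid conditioning on $\Tc_S$ in the hashed setting and instead expose the marking events through the collision representation above, so that every probability that arises touches only $O(r) = O(\log k)$ nodes. Once the two bounds hold, the final-level analysis of Section \ref{sec:final_level} transfers almost verbatim: the collision probability at most $\frac12$ per sequence again follows from pairwise independence and a union bound, and $C'\log k$ repetitions give $O(k^{1-c'})$ error over the $O(k)$ surviving leaves. The resource accounting changes only in that each of the $O(k\log n)$ node checks now costs one hash evaluation, giving decoding time $O(\Tsf_{\rm hash}\, k\log n)$; storage drops to $O(k\log n)$ for the possibly-defective list plus $O(\Ssf_{\rm hash}\log n)$ for the $O(\log n)$ hash functions (with $\Ctil = O(1)$), eliminating the $\Omega(n)$ placement table of Theorem \ref{thm:main1}.
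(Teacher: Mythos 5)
Your proposal is correct, and its overall skeleton matches the paper's: a level-by-level induction maintaining an $O(k)$ bound on the possibly-defective set, the crucial observation that one must \emph{not} condition on the defective placements $\Tc_S$ (which would destroy limited independence) but instead expose markings as collision events touching few nodes, the unchanged pairwise-independence argument at the final level, and the same resource accounting. Where you genuinely diverge is the technical engine for the per-level high-probability bound. The paper (Lemmas \ref{lem:var_bound} and \ref{lem:induction}, proved in Appendix \ref{app:var}) uses a second-moment method: it estimates $\PP[\Dc_u]$ and $\PP[\Dc_u \cup \Dc_v]$ by inclusion-exclusion truncated at depth $\Theta(\log k)$, deduces $\cov[D_u,D_v] = O(1/k)$ and hence $\var\big[\sum_u D_u\big] = O(k)$, and applies Chebyshev, which can only give $1 - O(1/k)$ per level when $\Ctil = 1$; the $\Ctil$ independent repetitions are then \emph{essential} to amplify this to $1 - O(k^{-\Ctil})$ per level, and choosing $\Ctil = c$ is how the paper reaches the stated error probability. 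Your route instead bounds the $r$-th factorial moment of the number of marked non-defectives with $r = \Theta(\log k)$: expanding each marking indicator over defective targets gives a sum of at most $k^r$ collision events, each involving at most $2r$ nodes and hence having probability $(1/(Ck))^r$ per repetition under $2r$-wise independence, so $\EE\big[\binom{X}{r}\big] \le \binom{O(k)}{r} q^r$, and Markov applied to $\binom{X}{r}$ yields a per-level failure probability $(O(q))^r = k^{-c'}$ in one shot. This buys two things the paper's argument does not: the exponent $c'$ is tunable directly through $C$ and $r$, so the repetitions $\Ctil$ become logically redundant (you could take $\Ctil = 1$), and you avoid the covariance cancellation and Bonferroni bookkeeping entirely; conversely, the paper's approach needs only first and second moments. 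Two minor remarks: your hedge about ``lower-order collision corrections'' is unnecessary, since only an upper bound is needed and the union bound over targets already gives $q^r$ outright with no inclusion-exclusion; and membership in $\Gc^{(\ell+1)}$ is actually decided by the level-$\ell$ tests, so your indexing of ``markings at level $\ell+1$'' is shifted by one relative to the algorithm --- this is cosmetic, since in either indexing the marking hash is fresh and independent of the conditioning event.
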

We briefly discuss some explicit values that can be attained for $\Tsf_{\rm hash}$ and $\Ssf_{\rm hash}$.  Supposing that $n$, $k$, and $C$ are powers of two, we can adopt the classical approach of Wegman and Carter \cite{Weg81} and consider a random polynomial over the finite field ${\rm GF}(2^{m})$, where $m \in \{ \log_2 k, \dotsc, \log_2 n\}$ (depending on the level).   In this case, one attains $r$-wise independence while storing $r$ elements of ${\rm GF}(2^{m})$ (or $O(r \log n)$ bits), and performing $O(r)$ additions and multiplications in ${\rm GF}(2^{m})$ to evaluate the hash.  As a result, with $r = \Theta(\log k)$, we get
\begin{equation}
    \Tsf_{\rm hash} = O(\log k), \qquad \Ssf_{\rm hash} = O(\log k \cdot \log n)
\end{equation}
under the assumption that operations in ${\rm GF}(2^{m})$ can be performed in constant time.  Hence, Theorem \ref{thm:main2} gives $O(k \log k \cdot \log n)$ decoding time and $O(k \log n + \log k \cdot \log^2 n)$ storage.  Different trade-offs can also be attained using more recent hash families that can attain $r$-wise independence with an evaluation time significantly less than $r$ \cite{Sie89,Tho17}.

The error probability of $O\big( \frac{\log n}{k^c} \big)$ is slightly worse than the $O(k^{-c})$ scaling of Theorem \ref{thm:main1}; in particular, $o(1)$ error probability is only guaranteed when $k = (\log n)^{\Omega(1)}$.  We expect that this requirement could be avoided via a refined analysis, e.g., by characterizing the behavior of the random variables $N_{\rm total}$ and $N_{\rm leaf}$ used in the proof of Theorem \ref{thm:main1}.  In addition, the following variants of Theorem \ref{thm:main2} can already be attained with almost no additional effort:
\begin{itemize}
    \item We can allow $c = \omega(1)$ in the statement of Theorem \ref{thm:main2}, but at the expense of the number of tests and decoding time increasing by a multiplicative $\Theta(c)$ factor.
    \item It is straightforward to show that $\EE[N_{\rm total}] = O\big( k \log \frac{n}{k} \big)$ and $\EE[N_{\rm leaf}] = O(k)$, and one can use Markov's inequality to deduce that $N_{\rm total} = O\big(k^{1+c}\log \frac{n}{k}\big)$ and $N_{\rm leaf} = O(k^{1+c})$ with probability $1-O(k^{-c})$.  For any constant $c > 0$, this approach leads to $O(k^{-c})$ error probability with $O(k \log n)$ tests, but the decoding time increases to $O(\Tsf_{\rm hash} k^{1+c} \log n)$, and the storage increases to $O(k^{1+c} \log^2 n + \Ssf_{\rm hash} \log n)$.
\end{itemize}
In the remainder of the section, we provide the proof of Theorem \ref{thm:main2}.

\subsection{Analysis} \label{sce:analysis2}

\subsubsection{Auxiliary variance calculation (case $\Ctil = 1$)} \label{sec:aux_var}

We first study the case $\Ctil = 1$ (i.e., no repetitions), as the case $\Ctil > 1$ will then follow easily.

Recall that the algorithm maintains an estimate of the possibly defective (PD) set at each level.  We will give conditions under which the size of this set remains at $O(k)$ throughout the course of the algorithm.  For $\ell = \ell_{\min} = \log_2 k$, we trivially have at most $k \le 4k$ PD items.  We will use an induction argument to show that every level has at most $4k$ PD items, with high probability.

Consider two non-defective nodes indexed by $u,v$ at a given level $\ell$ having $k' \le k$ defective nodes, let $\Dc_u,\Dc_v$ denote the respective events of hashing into a test containing one or more defectives, and let $D_u,D_v$ be the corresponding indicator random variables.  The dependence of these quantities on $\ell$ is left implicit.  We condition on all of the test placements performed at the earlier levels, accordingly writing $\EE_{\ell}[\cdot]$ and $\var_{\ell}[\cdot]$ for the conditional expectation and conditional variance.  In accordance with the above induction idea, we assume that there are at most $4k$ PD nodes at level $\ell$.

\begin{lemma} \label{lem:var_bound}
    {\em (Mean and Variance Bounds)} Under the preceding setup and definitions, if there are at most $4k$ PD nodes at level $\ell$, then we have the following when $\Ctil = 1$ and $C \ge 8$:
    \begin{gather}
        \EE_{\ell}\bigg[ \sum_{u} D_u \bigg] \le \frac{k}{2} \label{eq:E_sum} \\
        \var_{\ell}\bigg[ \sum_{u} D_u \bigg] = c_{\rm var} k, \label{eq:var_sum} 
    \end{gather}
    where the sums are over all non-defective PD nodes at the $\ell$-th level, and $c_{\rm var} > 0$ is a universal constant.
\end{lemma}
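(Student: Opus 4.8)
The mean bound \eqref{eq:E_sum} is the easy part. Conditioned on the earlier levels, each non-defective PD node $u$ is hashed uniformly and, by pairwise independence, independently of the at most $k$ defective nodes into one of the $Ck$ tests at level $\ell$. Since at most $k$ of these tests can contain a defective, $\EE_{\ell}[D_u] \le k/(Ck) = 1/C$, exactly as in Lemma \ref{lem:marking}; summing over the at most $4k$ non-defective PD nodes gives $\EE_{\ell}[\sum_u D_u] \le 4k/C \le k/2$ as soon as $C \ge 8$.

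For the variance \eqref{eq:var_sum} I would write $Y_u = \sum_{d} \openone\{h(u) = h(d)\}$ for the number of defective nodes colliding with $u$ (so $D_u = \openone\{Y_u \ge 1\}$), and split $\var_{\ell}[\sum_u D_u] = \sum_u \var_{\ell}[D_u] + \sum_{u \ne v} \cov_{\ell}[D_u, D_v]$. The diagonal is harmless, since each $\var_{\ell}[D_u] \le \EE_{\ell}[D_u] \le 1/C$ contributes at most $4k/C = O(k)$. The real work is the off-diagonal sum, and the central difficulty is that $D_u$ is a nonlinear (logical-OR) function of the collision events: a naive bound such as $\cov_{\ell}[D_u, D_v] \le \EE_{\ell}[D_u D_v] \le \EE_{\ell}[Y_u Y_v] = O((k')^2 p^2)$, with $p = 1/(Ck)$ and $k' \le k$ the number of defective nodes, is hopelessly lossy --- summed over the $\Theta(k^2)$ pairs it gives $O(k^2)$ rather than $O(k)$. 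What must be exploited is that the leading terms of $\EE_{\ell}[D_u D_v]$ and of $\EE_{\ell}[D_u]\EE_{\ell}[D_v]$ cancel, so that the genuine correlation is of strictly smaller order and arises only from $u$ and $v$ colliding with a common pair (or larger set) of defectives.

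To make this cancellation explicit I would expand each indicator by inclusion--exclusion, $D_u = \sum_{\emptyset \ne A} (-1)^{|A|-1} M_u^A$ with $M_u^A = \prod_{d \in A}\openone\{h(u)=h(d)\}$ ranging over subsets $A$ of the defective nodes, turning $\cov_{\ell}[D_u, D_v]$ into a signed double sum of the monomial covariances $\cov_{\ell}[M_u^A, M_v^B]$. Whenever the number $|A \cup B| + 2$ of nodes involved lies within the independence guarantee of the hash family, one has $\EE_{\ell}[M_u^A] = p^{|A|}$, and a direct computation shows $\cov_{\ell}[M_u^A, M_v^B] = 0$ when $|A \cap B| \le 1$, while $\cov_{\ell}[M_u^A, M_v^B] = p^{|A \cup B|+1} - p^{|A|+|B|}$ when $|A \cap B| \ge 2$ (a nonempty intersection forces $h(u) = h(v)$, which is precisely what produces the surviving correlation). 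The dominant surviving contribution is $A = B$ with $|A| = 2$, giving $\approx p^3$ for each of the $\binom{k'}{2}$ choices, i.e.\ $O((k')^2 p^3) = O(1/(C^3 k))$ per pair; summed over the $\Theta(k^2)$ pairs this is $O(k)$, and all remaining surviving terms are of even smaller order. Combined with the diagonal, this gives $\var_{\ell}[\sum_u D_u] = \cvar k$ for a universal constant $\cvar$.

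The main obstacle, and the place where $\Theta(\log k)$-wise independence is forced, is controlling the part of the expansion beyond the degree at which the hash family still guarantees independence. I would split the double sum into a low-degree part, handled exactly as above, and a high-degree tail of monomials $M_u^A$ with $|A|$ exceeding the independence parameter $r$. Since $\sum_{|A| = a} \EE_{\ell}[M_u^A] \le \binom{k'}{a} p^a \le (k'p)^a \le C^{-a}$ decays geometrically in $a$, the total tail contribution per pair is $O(C^{-r})$, which is $O(1/k)$ --- and hence $O(k)$ after summing over pairs --- precisely once $r = \Theta(\log k)$. I expect the careful bookkeeping of this tail, rather than the leading-order cancellation, to be the most delicate step to make fully rigorous; reassuringly, the clean leading behaviour simply reproduces the fact that under full independence $\var_{\ell}[\sum_u D_u]$ is governed by the $O((k')^2/(Ck))$ fluctuation in the number of positive tests at level $\ell$.
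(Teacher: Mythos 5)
Your mean bound and the low-degree part of your covariance computation are both correct. In fact, your monomial-level calculation---$\cov[M_u^A,M_v^B]=0$ when $|A\cap B|\le 1$, and $\approx p^{|A\cup B|+1}$ when $|A\cap B|\ge 2$, with $p=1/(Ck)$---is an illuminating way to see exactly where the cancellation between $\EE[D_uD_v]$ and $\EE[D_u]\EE[D_v]$ comes from. The paper's proof exhibits the same cancellation only at the very end, through the identity $2(1-e^{-x})-(1-e^{-2x})-(1-e^{-x})^2=0$ after approximating $\PP[\Dc_u]$ and $\PP[\Dc_u\cup\Dc_v]$, so on this point your route is arguably more transparent.

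However, your treatment of the high-degree tail---the step you yourself flag as the delicate one---has a genuine gap, and it is not a matter of bookkeeping. The estimate $\sum_{|A|=a}\EE_{\ell}[M_u^A]\le\binom{k'}{a}p^a\le C^{-a}$ rests on $\EE_{\ell}[M_u^A]\le p^{|A|}$, which requires $(|A|+1)$-wise independence; by definition this is exactly what is unavailable in the tail $|A|>r$. Under $r$-wise independence the only generic bound is $\EE_{\ell}[M_u^A]\le p^{r-1}$ (drop all but $r-1$ of the collision constraints), so the degree-$a$ layer is only controlled by $\binom{k'}{a}p^{r-1}$, which for $a\approx k'/2$ is of order $2^{k'}(Ck)^{-(r-1)}$---enormous when $r=\Theta(\log k)$. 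Nothing forces the high-degree monomials of an $r$-wise independent family to mimic their fully independent counterparts (a family could, for instance, place all nodes in a single bin with small but non-negligible probability), and the signed tail cannot be bounded term by term. The paper sidesteps this trap by never expanding beyond the truncation point: it applies the Bonferroni inequalities---truncated inclusion--exclusion, valid for an \emph{arbitrary} distribution, giving an upper bound at odd truncation depth and a lower bound at even depth---directly to the union probabilities $\PP[\Dc_u]$ and $\PP[\Dc_u\cup\Dc_v\,|\,h(u)\ne h(v)]$. Only the first $j_{\max}=\Theta(\log k)$ terms are ever evaluated, and those are computed exactly from limited independence; the geometric-series step in \eqref{eq:abs_bound2}--\eqref{eq:abs_bound3} compares two \emph{deterministic} binomial sums (the truncated sum and $1-(1-\nu p)^{k'}$ for $\nu=1,2$), so no probabilistic claim about high-degree monomials is ever needed. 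To repair your proof you would have to replace the ``low-degree part plus tail'' split by this sandwich structure (or an equivalent device); as written, the tail estimate assumes full-independence behaviour precisely in the regime where the hypothesis does not supply it.
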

The proof is given in Appendix \ref{app:var}.  Given this result, we easily deduce the following.

\begin{lemma} \label{lem:induction}
    For $\Ctil = 1$ and $C \ge 8$, conditioned on the $\ell$-th level having at most $4k$ possibly defective (PD) nodes, the same is true at the $(\ell+1)$-th level with probability $1 - O\big(\frac{1}{k}\big)$.
\end{lemma}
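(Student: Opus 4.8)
The plan is to translate the mean and variance control from Lemma~\ref{lem:var_bound} into a high-probability bound on the size of the PD set at the next level via Chebyshev's inequality. The first step is a purely structural counting observation: a node enters the PD set at level $\ell+1$ exactly when its parent at level $\ell$ is a PD node lying in a positive test, in which case both of its children are added. Letting $k' \le k$ denote the number of defective nodes at level $\ell$ (each of which is automatically in a positive test and hence PD), and recalling that $\sum_u D_u$ counts the non-defective PD nodes at level $\ell$ whose test is positive, the number of PD nodes at level $\ell+1$ equals $2\big(k' + \sum_u D_u\big) \le 2k + 2\sum_u D_u$; the factor of two is exact and the contributions are disjoint because distinct parents have disjoint pairs of children in the tree.

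Given this identity, it suffices to show that $\sum_u D_u \le k$ with probability $1 - O(1/k)$, since this immediately yields at most $2k + 2k = 4k$ PD nodes at level $\ell+1$, closing the induction. This is where I would invoke Lemma~\ref{lem:var_bound}: conditioned on the earlier levels, $\EE_\ell[\sum_u D_u] \le k/2$ and $\var_\ell[\sum_u D_u] = c_{\rm var} k$. The event $\{\sum_u D_u \ge k\}$ then forces a deviation of at least $k/2$ above the conditional mean, so Chebyshev's inequality gives
\[
  \PP_\ell\Big[ \sum_u D_u \ge k \Big] \le \frac{\var_\ell\big[ \sum_u D_u \big]}{(k/2)^2} = \frac{4 c_{\rm var}}{k} = O\Big(\tfrac{1}{k}\Big),
\]
and taking the complement finishes the argument.

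There is essentially no hard step remaining: the real difficulty---bounding the second moment of $\sum_u D_u$ in the presence of dependencies among the collision events $D_u$ (which arise because several non-defective nodes may hash into the same test)---has already been absorbed into Lemma~\ref{lem:var_bound}. The only place warranting a little care is the bookkeeping in the first step, namely confirming that the factor of two from the children and the additive $k$ from the always-positive defective nodes combine to produce exactly the threshold $\sum_u D_u \le k$ that matches the $4k$ induction hypothesis, and verifying that the $k'$ defective nodes really are all present in the PD set (which holds in the noiseless setting, since defective nodes never produce negative tests and so are never discarded).
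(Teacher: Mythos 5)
Your proposal is correct and follows essentially the same route as the paper's proof: bound the children of the at most $k$ defective nodes by $2k$, then use Lemma \ref{lem:var_bound} together with Chebyshev's inequality to show that at most $k$ non-defective PD nodes land in positive tests (a deviation of $k/2$ above the conditional mean, giving probability $O(c_{\rm var}/k) = O(1/k)$), contributing at most $2k$ further children. Your write-up merely makes explicit the bookkeeping (the exact factor of two and the fact that defective nodes are always PD in the noiseless setting) that the paper leaves implicit.
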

\begin{proof}
    Among the PD nodes at the $\ell$-th level, at most $k$ are defective, amounting to at most $2k$ children at the next level.  By Lemma \ref{lem:var_bound} and Chebyshev's inequality, with probability $1 - O\big(\frac{1}{k}\big)$, at most $k$ non-defective nodes are marked as PD, thus also amount to at most $2k$ additional children at the next level, for a total of $4k$.
\end{proof}

%

\subsubsection{Analysis of the error probability} \label{sec:pe}

At the first level $\ell = \log_2 k$, we trivially have $k \le 4k$ possibly defective (PD) nodes.  For $\Ctil = 1$, using Lemma \ref{lem:induction} and an induction argument, the same follows for all levels simultaneously with probability at least $1 - O(k^{-1} \log n)$.  For $\Ctil > 1$, we note that since we have $\Ctil$ repetitions at each level and only keep the nodes whose tests are {\em all} positive, the $1 - O(k^{-1})$ behavior becomes $1 - O(k^{-\Ctil})$ due to the independence of the repetitions.  Hence, the expression $1 - O(k^{-1} \log n)$ for $\Ctil = 1$ generalizes to $1 - O\big(k^{-\Ctil} \log n\big)$.

The analysis of the final level in Section \ref{sec:final_level} did not rely on $h(\cdot)$ being a fully independent hash function, but rather, only relied on a collision probability of $\frac{1}{2k}$ between any two given items.  Since this condition still holds for any pairwise (or higher) independent hash family, we immediately deduce the same conclusion: Conditioned on the final level having $O(k)$ nodes marked as possibly defective, and by choosing $C'$ appropriately in the algorithm description, we attain $O(k^{1-c'})$ error probability at this level for any fixed $c' > 0$.  

Combining the above and setting $\Ctil = c$ and $c' = 1+c$, we attain the desired scaling $O(k^{-c} \log n)$ in the theorem statement.

\subsubsection{Number of tests, decoding time, and storage} \label{sec:numbers2}

The remaining claims of Theorem \ref{thm:main2} are established as follows:
\begin{itemize}
    \item {\em Number of tests:} The number of tests is the same as in the fully independent case, possibly with a modified implied constant if $\Ctil > 1$ and/or a different choice of $C$ is used.
    \item {\em Decoding time}: The analysis of the decoding time is similar to the fully independent case (see Section \ref{sec:numbers}), but each hash takes $\Tsf_{\rm hash}$ time to compute.  Hence, the decoding time is $O\big(\Tsf_{\rm hash} k \log n\big)$.
    \item {\em Storage}: We use $O(1)$ hash functions at each level except the last, and $O( \log k )$ hash functions at the final level, for a total of $O\big( \log\frac{n}{k} + \log k \big) = O(\log n)$, requiring $O(\Ssf_{\rm hash} \log n)$ storage.  In addition, under the high-probability event that there are $O(k)$ possibly defective nodes at each level, their storage requires $O(k)$ integers, or $O(k \log n)$ bits.  Hence, the total storage is $O(k \log n + \Ssf_{\rm hash} \log n)$.
\end{itemize}

\section{Conclusion} \label{sec:conclusion}

We have presented a novel non-adaptive group testing algorithm ensuring high-probability (for-each) recovery with $O(k \log n)$ scaling in both the number of tests and decoding time.  In addition, we presented a low-storage variant with similar guarantees depending on the hash family used.  An immediate open question for this variant is whether similar guarantees hold for $O(1)$-wise independent hash families.  In addition, even for the fully independent version, it would be of significant interest to develop a variant that is robust to random noise in the test outcomes (see Footnote \ref{foot:noise} on Page \pageref{foot:noise}).


\appendix

\section*{\huge Appendix}

\section{Discussion on the SAFFRON Algorithm} \label{sec:saffron}

While the SAFFRON algorithm \cite{Lee15a} is based on sparse-graph codes, we find it most instructive to compare against the simplified singleton-only version \cite[Sec.~5.4]{Ald19}, as the more sophisticated version does not attain better scaling laws (though it may attain better constant factors).

Singleton-only SAFFRON is briefly outlined as follows.  One forms $O(k \log k)$ ``bundles'' of tests of size $2 \log_2 n$ each, and assigns each item to any given bundle with probability $O\big(\frac{1}{k}\big)$.  If a given item is assigned to a given bundle, then its $(\log_2 n)$-bit description is encoded into the first $\log_2 n$ tests (i.e., the item is included in the test if and only if its bit description contains a $1$ at the corresponding location), and its bit-wise complement is encoded into the last $\log_2 n$ tests.  The following decoding procedure ensures the identification of any defective item for which there exists a bundle in which it is included without any other defective items:
\begin{itemize}
    \item For each bundle, check whether the first $\log_2 n$ test outcomes equal the bit-wise negation of the second $\log_2 n$ outcomes.
    \item If so, add the item with bit representation given by the first $\log_2 n$ outcomes into the defective set estimate.
\end{itemize}
Due to the first step, the second step will never erroneously be performed on a bundle without defective items, nor on a bundle with multiple defective items.

In \cite{Lee15a}, a decoding time of $O(k \log k \cdot \log n)$ is stated under the assumption that reading the $2 \log_2 n$ bits takes $O(\log n)$ time.  However, if the associated $2 \log_2 n$ tests are stored in memory as two ``words'' of length $\log_2 n$ each, then a word-RAM model of computation only incurs $O(1)$ time per bundle, or $O(k \log k)$ time overall.

We also note that SAFFRON only requires $O(k \log k \cdot \log n)$ bits of storage, amounting to negligible storage overhead beyond the test outcomes themselves.  This is because the only item indices stored are those added to the estimate of the defective set, and there are at most $k$ such indices (or $k \log_2 n$ bits) due to the fact that SAFFRON makes no false positives. 

\section{Proof of Lemma \ref{lem:var_bound} (Mean and Variance Bounds)} \label{app:var}

For ease of notation, we leave the subscripts $(\cdot)_{\ell}$ implicit throughout the proof, but the associated conditioning is understood to apply to all probabilities, expectations, variance terms, and so on.

    We first prove \eqref{eq:E_sum}.  The event $\Dc_u$ occurs if $u$ is hashed into the same bin as any of the $k' \le k$ defective nodes.  Since we are hashing into $\{1,\dotsc,Ck\}$ and the hash family is (at least) pairwise independent, each collision occurs with probability $\frac{1}{Ck}$.  Hence, by the union bound, $u$ is in a positive test with probability at most $\frac{1}{C}$, and \eqref{eq:E_sum} follows from the assumption that there are at most $4k$ PD nodes and $C \ge 8$. 

    As for \eqref{eq:var_sum}, we first characterize $\cov[D_u,D_v]$, writing
    \begin{align}
        \cov[D_u,D_v] 
            &= \EE[D_u D_v] - \EE[D_u] \EE[D_v] \\
            &= \PP[\Dc_u \cap \Dc_v] - \PP[\Dc_u]\PP[\Dc_v] \\
            &= \PP[\Dc_u] + \PP[\Dc_v] - \PP[\Dc_u \cup \Dc_v] - \PP[\Dc_u]\PP[\Dc_v]. \label{eq:p_diff}
    \end{align}
    We proceed by bounding $\PP[\Dc_u]$ (the same bound holds for $\PP[\Dc_v]$) and $\PP[\Dc_u \cup \Dc_v]$ separately.
    
    \paragraph*{Probability of the individual event}  Fix a non-defective node $u$.  Let $h(\cdot)$ denote the random hash function with output values in $\{1,\dotsc,Ck\}$, and for each defective node indexed by $i \in \{1,\dotsc,k'\}$ (where $k'$ is the total number of defective nodes at the level under consideration), let $B_i$ be the ``bad'' event that $h(i) = h(u)$.  We apply the inclusion-exclusion principle, which is written in terms of the following quantities for $j = 1,\dotsc,k'$:
    \begin{equation}
         T_j = \sum_{1 < i_1 < \dotsc < i_j < k'} \PP[B_{i_1} \cap \dotsc \cap B_{i_j}].
    \end{equation}
    If the hash function is $(j+1)$-wise independent, this simplifies to 
    \begin{equation}
         T_j = {k' \choose j} \Big(\frac{1}{Ck}\Big)^j.
    \end{equation}
    Hence, if the hash function is $(j_{\max}+1)$-wise independent for some $j_{\max}$, then the inclusion-exclusion principle gives
    \begin{align}
        \PP[\Dc_u] 
            &= \PP\bigg[\bigcup_{i=1,\dotsc,k'} B_i \bigg] \\
            &\le \sum_{j=1}^{j_{\max}} (-1)^{j+1} T_j \label{eq:inc_exc0} \\
            &= \sum_{j=1}^{j_{\max}} {k' \choose j} (-1)^{j+1} \Big(\frac{1}{Ck}\Big)^j \label{eq:inc_exc}
    \end{align}
    for odd-valued $j_{\max}$, and the reverse inequality for even-valued $j_{\max}$.  Using the fact that $\sum_{j=1}^{k'} {k' \choose j} (-1)^{j+1} \big(\frac{1}{Ck}\big)^j = 1 - \big(1-\frac{1}{Ck}\big)^{k'}$, we can write \eqref{eq:inc_exc} as
    \begin{equation}
        \PP[\Dc_u] \le 1 - \Big(1-\frac{1}{Ck}\Big)^{k'} - \sum_{j_{\max}+1}^{k'} {k' \choose j} (-1)^{j+1} \Big(\frac{1}{Ck}\Big)^j. \label{eq:inc_exc2}
    \end{equation}
    The final term can then be bounded as follows in absolute value:
    \begin{align}
        \bigg| \sum_{j_{\max}+1}^{k'} {k' \choose j} (-1)^{j+1} \Big(\frac{1}{Ck}\Big)^j \bigg|
            &\le \sum_{j_{\max}+1}^{k'} {k' \choose j} \Big(\frac{1}{Ck}\Big)^j \\
            &\le \sum_{j_{\max}+1}^{\infty} \Big(\frac{1}{C}\Big)^j \label{eq:abs_bound2} \\
            &= \frac{(1/C)^{j_{\max}+1}}{1-1/C}, \label{eq:abs_bound3} 
    \end{align}
    where \eqref{eq:abs_bound2} uses ${k' \choose j} \le (k')^j$ and $k' \le k$, and \eqref{eq:abs_bound3} applies the geometric series formula.  Assuming $C \ge 2$, we can further upper bound the above expression by $(1/C)^{j_{\max}}$, and hence by any target value $\delta_0$ provided that $j_{\max} \ge \log_{C}\frac{1}{\delta_0}$.  Recall also that \eqref{eq:inc_exc2} is reversed for even-valued $j_{\max}$, so loosening the preceding requirement to $j_{\max} \ge \lceil \log_{C}\frac{1}{\delta_0} \rceil + 1$ gives
    \begin{equation}
        1 - \Big(1-\frac{1}{Ck}\Big)^{k'} - \delta_0 \le \PP[\Dc_u] \le 1 - \Big(1-\frac{1}{Ck}\Big)^{k'} + \delta_0. \label{eq:d0_deviation}
    \end{equation}
    Since $u$ is arbitrary, the same bound also holds for $\PP[\Dc_v]$.
    
    \paragraph*{Probability of the union of two events} We can decompose $\PP[\Dc_u \cup \Dc_v]$ as follows:
    \begin{align}
        \PP[\Dc_u \cup \Dc_v] 
            &= \Big(1 - \frac{1}{Ck}\Big) \PP[\Dc_u \cup \Dc_v \,|\, h(u) \ne h(v)] + \frac{1}{Ck}  \PP[\Dc_u \cup \Dc_v \,|\, h(u) = h(v)] \\
            &= \PP[\Dc_u \cup \Dc_v \,|\, h(u) \ne h(v)] + O\Big( \frac{1}{k} \Big).
    \end{align}
    Hence, we focus on the case $h(u) \ne h(v)$ in the following.  Similar to the above, let $B'_i$ be the ``bad'' event that $h(i) \in \{ h(u), h(v) \}$, and define
    \begin{equation}
         T'_j = \sum_{1 < i_1 < \dotsc < i_j < k'} \PP[B'_{i_1} \cap \dotsc \cap B'_{i_j} \,|\, h(u) \ne h(v)].
    \end{equation}
    If the hash function is $(j+2)$-wise independent, this simplifies to 
    \begin{equation}
         T'_j = {k' \choose j} \Big(\frac{2}{Ck}\Big)^j,
    \end{equation}
    where the factor of two comes from the possibility of colliding with either $u$ or $v$.  Following the same argument as above, we find that if (i) $j_{\max} \ge \lceil \log_{C/2}\frac{1}{\delta_0} \rceil + 1$, (ii) $C \ge 4$, and (iii) the hash function is $(j_{\max} + 2)$-wise independent, then the following analog of \eqref{eq:d0_deviation} holds:
    \begin{equation}
        1 - \Big(1-\frac{2}{Ck}\Big)^{k'} - \delta_0 \le \PP[\Dc_u \cup \Dc_v \,|\, h(u) \ne h(v)] \le 1 - \Big(1-\frac{2}{Ck}\Big)^{k'} + \delta_0. \label{eq:d0_deviation2}
    \end{equation}
    
    \paragraph*{Combining and simplifying} Setting $\delta_0 = \frac{1}{k}$ and combining the above findings, we deduce that for a $\Theta(\log k)$-wise independent hash,
    \begin{gather}
        \PP[\Dc_u] = 1 - \Big(1-\frac{1}{Ck}\Big)^{k'} + O\Big( \frac{1}{k} \Big), \\
        \PP[\Dc_u \cup \Dc_v] = 1 - \Big(1-\frac{2}{Ck}\Big)^{k'} + O\Big( \frac{1}{k} \Big).
    \end{gather}
    The idea in the following is to approximate $1 - \frac{\nu}{Ck} \approx e^{-\frac{1}{Ck}}$ for $\nu=1,2$, and substitute into \eqref{eq:p_diff}.  To make this more precise, we use the fact that $k' \le k$ to write
    \begin{align}
        \Big(1 - \frac{1}{Ck}\Big)^{k'} 
            &= \bigg( e^{-\frac{1}{Ck} + O\big( \frac{1}{k^2} \big)} \bigg)^{k'} \\
            &= e^{-\frac{k'}{Ck} + O\big( \frac{1}{k} \big)} \\
            &= e^{-\frac{k'}{Ck}} + O\Big( \frac{1}{k} \Big).
    \end{align}
    Applying a similar argument to  $\big(1 - \frac{2}{Ck}\big)^{k'}$ and substituting into \eqref{eq:p_diff}, we obtain
    \begin{align}
        \cov[D_u,D_v] 
            &= 2\Big(1 - e^{-\frac{k'}{Ck}}\Big) - \Big(1 - e^{-\frac{2k'}{Ck}}\Big) - \Big(1 - e^{-\frac{k'}{Ck}}\Big)^2 + O\Big( \frac{1}{k} \Big) \\
            &= O\Big( \frac{1}{k} \Big),
    \end{align}
    since the first three terms cancel upon expanding the square.  The proof is concluded by writing
    \begin{align}
        \var\bigg[ \sum_{u} D_u \bigg] = \sum_{u} \var[D_u] + \sum_{u \ne v} \cov[D_u, D_v] = O(k)
    \end{align}
    since $\var[D_u] \le \EE[D_u] \le \frac{1}{C}$, and there are at most $4 k$ values of $u$ by assumption.


%


\bibliographystyle{myIEEEtran}
\bibliography{JS_References}

\begin{thebibliography}{10}
\providecommand{\url}[1]{#1}
\csname url@samestyle\endcsname
\providecommand{\newblock}{\relax}
\providecommand{\bibinfo}[2]{#2}
\providecommand{\BIBentrySTDinterwordspacing}{\spaceskip=0pt\relax}
\providecommand{\BIBentryALTinterwordstretchfactor}{4}
\providecommand{\BIBentryALTinterwordspacing}{\spaceskip=\fontdimen2\font plus
\BIBentryALTinterwordstretchfactor\fontdimen3\font minus
  \fontdimen4\font\relax}
\providecommand{\BIBforeignlanguage}[2]{{%
\expandafter\ifx\csname l@#1\endcsname\relax
\typeout{** WARNING: IEEEtranS.bst: No hyphenation pattern has been}%
\typeout{** loaded for the language `#1'. Using the pattern for}%
\typeout{** the default language instead.}%
\else
\language=\csname l@#1\endcsname
\fi
#2}}
\providecommand{\BIBdecl}{\relax}
\BIBdecl

\bibitem{Ald18}
M.~{Aldridge}, ``Individual testing is optimal for nonadaptive group testing in
  the linear regime,'' \emph{IEEE Trans. Inf. Theory}, vol.~65, no.~4, pp.
  2058--2061, April 2019.

\bibitem{Ald14a}
M.~Aldridge, L.~Baldassini, and O.~Johnson, ``Group testing algorithms: Bounds
  and simulations,'' \emph{IEEE Trans. Inf. Theory}, vol.~60, no.~6, pp.
  3671--3687, June 2014.

\bibitem{Ald15}
M.~Aldridge, ``The capacity of {B}ernoulli nonadaptive group testing,''
  \emph{IEEE Trans. Inf. Theory}, vol.~63, no.~11, pp. 7142--7148, 2017.

\bibitem{Ald19}
M.~Aldridge, O.~Johnson, and J.~Scarlett, ``Group testing: An information
  theory perspective,'' \emph{Found. Trend. Comms. Inf. Theory}, vol.~15, no.
  3--4, pp. 196--392, 2019.

\bibitem{Ati12}
G.~Atia and V.~Saligrama, ``Boolean compressed sensing and noisy group
  testing,'' \emph{IEEE Trans. Inf. Theory}, vol.~58, no.~3, pp. 1880--1901,
  March 2012.

\bibitem{Bal13}
L.~Baldassini, O.~Johnson, and M.~Aldridge, ``The capacity of adaptive group
  testing,'' in \emph{IEEE Int. Symp. Inf. Theory}, July 2013, pp. 2676--2680.

\bibitem{Blo70}
B.~H. Bloom, ``Space/time trade-offs in hash coding with allowable errors,''
  \emph{Comms. ACM}, vol.~13, no.~7, pp. 422--426, 1970.

\bibitem{Bon19}
S.~Bondorf, B.~Chen, J.~Scarlett, H.~Yu, and Y.~Zhao, ``Cross-sender bit-mixing
  coding,'' in \emph{Int. Conf. Inf. Proc. Sensor Nets. (IPSN)}, 2019.

\bibitem{Bon19a}
S.~Bondorf, B.~Chen, J.~Scarlett, H.~Yu, and Y.~Zhao, ``Sublinear-time
  non-adaptive group testing with {$O(k \log n)$} tests via bit-mixing
  coding,'' 2019, https://arxiv.org/abs/1904.10102.

\bibitem{Cai13}
S.~Cai, M.~Jahangoshahi, M.~Bakshi, and S.~Jaggi, ``Efficient algorithms for
  noisy group testing,'' \emph{IEEE Trans. Inf. Theory}, vol.~63, no.~4, pp.
  2113--2136, 2017.

\bibitem{Cha14}
C.~L. Chan, S.~Jaggi, V.~Saligrama, and S.~Agnihotri, ``Non-adaptive group
  testing: Explicit bounds and novel algorithms,'' \emph{IEEE Trans. Inf.
  Theory}, vol.~60, no.~5, pp. 3019--3035, May 2014.

\bibitem{Che09}
M.~Cheraghchi, ``Noise-resilient group testing: Limitations and
  constructions,'' in \emph{Int. Symp. Found. Comp. Theory}, 2009, pp. 62--73.

\bibitem{Che20}
M.~Cheraghchi and V.~Nakos, ``Combinatorial group testing and sparse recovery
  schemes with near-optimal decoding time,'' 2020,
  https://arxiv.org/abs/2006.08420.

\bibitem{Cli10}
R.~Clifford, K.~Efremenko, E.~Porat, and A.~Rothschild, ``Pattern matching with
  don't cares and few errors,'' \emph{J. Comp. Sys. Sci.}, vol.~76, no.~2, pp.
  115--124, 2010.

\bibitem{Coj19}
A.~Coja-Oghlan, O.~Gebhard, M.~Hahn-Klimroth, and P.~Loick,
  ``Information-theoretic and algorithmic thresholds for group testing,'' in
  \emph{Int. Colloq. Aut., Lang. and Prog. (ICALP)}, 2019.

\bibitem{Coj19a}
A.~Coja-Oghlan, O.~Gebhard, M.~Hahn-Klimroth, and P.~Loick, ``Optimal
  non-adaptive group testing,'' 2019, https://arxiv.org/abs/1911.02287.

\bibitem{Cor05}
G.~Cormode and S.~Muthukrishnan, ``What's hot and what's not: Tracking most
  frequent items dynamically,'' \emph{ACM Trans. Database Sys.}, vol.~30,
  no.~1, pp. 249--278, March 2005.

\bibitem{Cor05a}
G.~Cormode and S.~Muthukrishnan, ``An improved data stream summary: the
  count-min sketch and its applications,'' \emph{J. Algs.}, vol.~55, no.~1, pp.
  58--75, 2005.

\bibitem{Dor43}
R.~Dorfman, ``The detection of defective members of large populations,''
  \emph{Ann. Math. Stats.}, vol.~14, no.~4, pp. 436--440, 1943.

\bibitem{Du93}
D.~Du and F.~K. Hwang, \emph{Combinatorial group testing and its
  applications}.\hskip 1em plus 0.5em minus 0.4em\relax World Scientific, 2000,
  vol.~12.

\bibitem{Dwa69}
M.~Dwass, ``The total progeny in a branching process and a related random
  walk,'' \emph{J. Appl. Prob.}, vol.~6, no.~3, pp. 682--686, 1969.

\bibitem{Dya83}
A.~G. D'yachkov and V.~V. Rykov, ``A survey of superimposed code theory,''
  \emph{Prob. Contr. Inf.}, vol.~12, no.~4, pp. 1--13, 1983.

\bibitem{Dya82}
A.~G. {D'yachkov}, I.~V. {Vorobyev}, N.~A. {Polyanskii}, and V.~Y. {Shchukin},
  ``Bounds on the rate of superimposed codes,'' in \emph{IEEE Int. Symp. Inf.
  Theory}, June 2014.

\bibitem{Fel57}
W.~Feller, \emph{An introduction to probability theory and its
  applications}.\hskip 1em plus 0.5em minus 0.4em\relax Wiley, 1957.

\bibitem{Ant11}
A.~Fern\'andez~Anta, M.~A. Mosteiro, and J.~Ram\'on Mu\~{n}oz, ``Unbounded
  contention resolution in multiple-access channels,'' in \emph{Distributed
  Computing}.\hskip 1em plus 0.5em minus 0.4em\relax Springer Berlin
  Heidelberg, 2011, vol. 6950, pp. 225--236.

\bibitem{Gil07}
A.~C. Gilbert, M.~J. Strauss, J.~A. Tropp, and R.~Vershynin, ``One sketch for
  all: Fast algorithms for compressed sensing,'' in \emph{Proc. ACM-SIAM Symp.
  Disc. Alg. (SODA)}, New York, 2007, pp. 237--246.

\bibitem{Gil08}
A.~Gilbert, M.~Iwen, and M.~Strauss, ``Group testing and sparse signal
  recovery,'' in \emph{Asilomar Conf. Sig., Sys. and Comp.}, Oct. 2008, pp.
  1059--1063.

\bibitem{Hog20}
C.~A. Hogan, M.~K. Sahoo, and B.~A. Pinsky, ``{Sample Pooling as a Strategy to
  Detect Community Transmission of {SARS-CoV-2}},'' \emph{J. Amer. Med.
  Assoc.}, April 2020.

\bibitem{Hwa72}
F.~Hwang, ``A method for detecting all defective members in a population by
  group testing,'' \emph{J. Amer. Stats. Assoc.}, vol.~67, no. 339, pp.
  605--608, 1972.

\bibitem{Ina19}
H.~A. {Inan}, P.~{Kairouz}, M.~{Wootters}, and A.~{\"Ozg\"ur}, ``On the
  optimality of the {K}autz-{S}ingleton construction in probabilistic group
  testing,'' \emph{IEEE Trans. Inf. Theory}, vol.~65, no.~9, pp. 5592--5603,
  Sept. 2019.

\bibitem{Ind10}
P.~Indyk, H.~Q. Ngo, and A.~Rudra, ``Efficiently decodable non-adaptive group
  testing,'' in \emph{ACM-SIAM Symp. Disc. Alg. (SODA)}, 2010.

\bibitem{Ind11}
P.~Indyk and E.~Price, ``K-median clustering, model-based compressive sensing,
  and sparse recovery for earth mover distance,'' in \emph{ACM Symp. Theory
  Comp. (STOC)}, 2011, pp. 627--636.

\bibitem{Joh16}
O.~Johnson, M.~Aldridge, and J.~Scarlett, ``Performance of group testing
  algorithms with near-constant tests-per-item,'' \emph{IEEE Trans. Inf.
  Theory}, vol.~65, no.~2, pp. 707--723, Feb. 2019.

\bibitem{Lee15a}
K.~{Lee}, K.~{Chandrasekher}, R.~{Pedarsani}, and K.~{Ramchandran},
  ``{SAFFRON}: A fast, efficient, and robust framework for group testing based
  on sparse-graph codes,'' \emph{IEEE Trans. Sig. Proc.}, vol.~67, no.~17, pp.
  4649--4664, Sept. 2019.

\bibitem{Mal80}
M.~B. Malyutov and P.~S. Mateev, ``Screening designs for non-symmetric response
  function,'' \emph{Mat. Zametki}, vol.~29, pp. 109--127, 1980.

\bibitem{Mal78}
M.~Malyutov, ``\BIBforeignlanguage{English}{The separating property of random
  matrices},'' \emph{\BIBforeignlanguage{English}{Math. Notes Acad. Sci.
  {USSR}}}, vol.~23, no.~1, pp. 84--91, 1978.

\bibitem{Ngo11}
H.~Q. Ngo, E.~Porat, and A.~Rudra, ``Efficiently decodable error-correcting
  list disjunct matrices and applications,'' in \emph{Int. Colloq. Automata,
  Lang., and Prog.}, 2011.

\bibitem{Por11}
E.~Porat and A.~Rothschild, ``Explicit nonadaptive combinatorial group testing
  schemes,'' \emph{IEEE Trans. Inf. Theory}, vol.~57, no.~12, pp. 7982--7989,
  2011.

\bibitem{Sca15b}
J.~Scarlett and V.~Cevher, ``Phase transitions in group testing,'' in
  \emph{Proc. ACM-SIAM Symp. Disc. Alg. (SODA)}, 2016.

\bibitem{Sie89}
A.~Siegel, ``On universal classes of fast high performance hash functions,
  their time-space tradeoff, and their applications,'' in \emph{IEEE Conf.
  Found. Comp. Sci. (FOCS)}, 1989.

\bibitem{Tho17}
M.~Thorup, ``Fast and powerful hashing using tabulation,'' \emph{Comms. ACM},
  vol.~60, no.~7, pp. 94--101, 2017.

\bibitem{Ver20}
C.~M. Verdun, T.~Fuchs, P.~Harar, D.~Elbr{\"a}chter, D.~S. Fischer, J.~Berner,
  P.~Grohs, F.~J. Theis, and F.~Krahmer, ``Group testing for {SARS-CoV-2}
  allows for up to 10-fold efficiency increase across realistic scenarios and
  testing strategies,'' 2020,
  https://www.medrxiv.org/content/10.1101/2020.04.30.20085290v2.

\bibitem{Ver10}
R.~Vershynin, ``Introduction to the non-asymptotic analysis of random
  matrices,'' 2010, https://arxiv.org/abs/1011.3027.

\bibitem{Weg81}
M.~N. Wegman and J.~L. Carter, ``New hash functions and their use in
  authentication and set equality,'' \emph{J. Comp. Sys. Sci.}, vol.~22, no.~3,
  pp. 265--279, 1981.

\bibitem{Yel20}
I.~Yelin, N.~Aharony, E.~Shaer-Tamar, A.~Argoetti, E.~Messer, D.~Berenbaum,
  E.~Shafran, A.~Kuzli, N.~Gandali, T.~Hashimshony, Y.~Mandel-Gutfreund,
  M.~Halberthal, Y.~Geffen, M.~Szwarcwort-Cohen, and R.~Kishony, ``Evaluation
  of {COVID-19} {RT-qPCR} test in multi-sample pools,'' 2020,
  https://www.medrxiv.org/content/early/2020/03/27/2020.03.26.20039438.

\end{thebibliography}
 
\end{document}